\documentclass[11pt,a4paper]{article}

\usepackage[margin=1in]{geometry}
\usepackage{amsmath}
\usepackage{amssymb}
\usepackage{amsthm}
\usepackage{array}
\usepackage{booktabs}
\newcolumntype{L}[1]{>{\raggedright\arraybackslash}p{#1}}
\usepackage{tikz}
\usepackage[numbers]{natbib}
\usepackage{needspace}
\usepackage[hidelinks,unicode]{hyperref}
\newcommand{\SourceDigest}{94babd15d7517d67d45053467f5a1b0b0f1abbc1fac62be5d8998d00d54298cf}
{unrecorded}}
\hypersetup{
  pdftitle={Finitely Verifiable Segmentation: Decidability for Every Tokenizer with a Finite-State Verifier},
  pdfauthor={Nicklas Nidhögg},
  pdfsubject={Segmentation policies; finite verifiers; decidability; rational relations},
  pdfkeywords={segmentation, tokenization, finite verifier, decidability, rational relations, source-sha256:\SourceDigest}
}
\theoremstyle{definition}
\newtheorem{definition}{Definition}

\theoremstyle{plain}
\newtheorem{theorem}{Theorem}
\newtheorem{lemma}{Lemma}
\newtheorem{corollary}{Corollary}
\newtheorem{proposition}{Proposition}

\newcommand{\code}[1]{\texttt{#1}}

\title{Finitely Verifiable Segmentation:\\Decidability for Every Tokenizer with a Finite-State Verifier}
\author{Nicklas Nidh\"ogg\\[0.4em] \normalsize Independent Researcher\\ \normalsize
\texttt{nicklas.nidhogg@gmail.com}\\ \normalsize ORCID: 0009-0006-6161-6150}
\date{October 16, 2026}

\begin{document}
\maketitle

\begin{abstract}
A segmentation policy assigns to each byte string at most one segmentation into tokens. Call it
\emph{finitely verifiable} when a finite automaton over marked strings, bytes paired with boundary bits, accepts exactly
the policy's own segmentations. The automaton judges a proposed segmentation and need not compute one. From a supplied
verifier the paper gives constructive decisions for four questions, whether a window certifies a token boundary, whether
the boundary gap is bounded, whether a mask can keep generation from dead-ending, and whether two policies are
equivalent. The last three specialize limitedness, co-accessibility and the equivalence of single-valued transducers.
The first refines the synchronizing cut of code theory, and two more ask whether a certified cut, or the cuts of an
inventory, let each piece scan alone. The class sits strictly between the online-scannable policies and those with
regular domains, and online scannability is decidable in polynomial time through Choffrut's theorem. Markings are
length-preserving, so the class is exactly the rational segmentation policies, a normal form rather than a restriction.
The class is closed under four pipeline operations, composition over a labeled stage, restriction to a regular schema,
splitting at character-class changes, and byte-fallback union of literal policies. Each composite inherits the six
decisions, and certificates transfer between stages under four laws. Under rational, unifilar, almost surely stopping
finite-state sources, each mass of the weighted section is an exact rational, and so is each rate where expected length
is positive. Maximal-munch lexing, canonical byte-pair encoding and two-mode WordPiece instantiate the interface, and
the munch library implements all six decisions. Every theorem carries an executable check, and every reported statistic
is a program's own output.
\end{abstract}

\section{Introduction}

A production system may need to know where its byte stream resynchronizes after a lost fragment, how far that stream can
run before a token boundary is guaranteed, whether a grammar-masked generator can be left with no admissible next token,
and what a tokenizer upgrade does to cached segmentations. Exact procedures answer some of these in particular settings,
such as masks in constrained decoding~\citep{koo2024automata} and the equivalence of byte-pair
dictionaries~\citep{berglund2024bpedfa}. Each question quantifies over all inputs, and a finite sample of inputs cannot
establish such a universal guarantee. From a supplied verifier, this paper gives constructive decisions for all four,
certification, the boundary gap, realizability and equivalence.

The series began with certified bytes~\citep{nidhogg2026splitpoints} and certified
windows~\citep{nidhogg2026splitwindows} for maximal-munch scanning, and its two companion studies answered special cases
of those questions. The recovery companion \citep{nidhogg2026panic} proved that certificates make resuming after damage
sound. The splitting companion \citep{nidhogg2026splitting} used the same certificates to divide work without
speculation, and decided certification, the anchor-free span, and segmentation differentials for regular token sets. The
present paper states what those decisions were always instances of. It rests on one hypothesis.

\Needspace{7\baselineskip}
\begin{definition}[Segmentation policy; finite verifiability]\label{def:fv}
A \emph{segmentation policy} assigns to each string at most one segmentation, none when the string is not tokenizable. A
segmentation here is an unlabeled partition of the string into consecutive nonempty pieces, equivalently its boundary
bit pattern, the boundary projection of what the splitting companion calls a segmentation. The names a stage gives those
pieces are separate labeler data, which is why two policies naming identical spans differently are the same policy. A
\emph{marked string} pairs each byte with one bit, whether a token boundary follows that byte's position, and a position
$i$ of it is a \emph{token start} exactly when $i = 0$ or the bit at $i-1$ is set. A policy is
\emph{finitely verifiable} when some deterministic finite automaton over marked strings accepts exactly the markings
that are the policy's own segmentations of its tokenizable strings.
\end{definition}

The verifier decides a \emph{proposed} segmentation by reading it. It need not compute the segmentation online, and that
distinction is the crux of the paper. Section~\ref{sec:real} exhibits a policy with no finite-state bounded-delay
scanner that is nevertheless finitely verifiable. Verification is the strictly weaker demand, and everything below asks
only for it.

The class is not new, and the paper does not present it as new. A finitely verifiable policy is a length-preserving
partial \emph{automatic function} in the sense of \citet{case2013automatic}, its graph a functional synchronous
relation, regular under convolution. The text keeps the verifier-first name because the verifier is the interface every
construction below programs against. The first three contributions below were not found in the transducer literature
that Section~\ref{sec:related} sets out, and are what the paper claims as its own; the last two locate the class among
known classes and apply it to real tokenizers.

The contributions are:

\begin{itemize}
\item The four decisions, and the two that ask whether certified cuts let the pieces scan alone, run on supplied
verifier presentations with constructive outputs and survive the four pipeline operations of Theorem~\ref{thm:calculus},
composition over a labeled stage, restriction to a regular schema, splitting at character-class changes and
byte-fallback union of literal policies (Sections~\ref{sec:four} and~\ref{sec:calculus}).
\item Certificates transfer between stages under proved laws, with an exact stage-local lift
(Section~\ref{sec:transfer}).
\item Zero differential mass is identified with the absence of a boundary differential in the source's support, and the
exact masses behind it are solved from the supplied verifiers through the same interface (Section~\ref{sec:weighted}).
\item The class sits strictly between the online-scannable policies and those with regular domains, its inner frontier
is decidable in polynomial time, and over the marking encoding it is exactly the rational segmentation policies
(Section~\ref{sec:real}).
\item Literal and regular maximal munch, canonical byte-pair encoding and two-mode WordPiece carry the interface, each
validated against an independent reference scanner or encoder, and the munch library implements all six decisions
(Sections~\ref{sec:interface} and~\ref{sec:instances}).
\end{itemize}

\begin{table}[tp]
\centering \small
\begin{tabular}{L{2.5cm}L{8.8cm}L{3.3cm}}
\toprule Result & Statement & Where \\ \midrule
Four decisions & Certification, the boundary gap and realizability are decidable for every supplied verifier, and full
equivalence for every pair. Certification returns a counterexample when negative, equivalence a differing string, the
boundary gap its exact supremum when bounded and a pumpable lasso otherwise, and realizability the mask. &
Theorem~\ref{thm:four} \\
\addlinespace Cut and chunk independence & Whether a certified cut lets both its sides scan alone, and whether all of
an inventory's cuts do at once, are decidable, each with a shortest counterexample. Cut independence of every pair
implies chunk independence, not conversely. Every maximal-munch cut and inventory is independent, and WordPiece and
threshold-splice ones need not be. & Theorems~\ref{thm:cut} and~\ref{thm:chunk} \\
\addlinespace Where the class sits & A regular token set with unbounded commit delay has a finite verifier but no
finite-state scanner. A non-regular domain puts a policy outside the class, and a regular domain does not put it
inside. & Section~\ref{sec:real}, Lemma~\ref{lem:outside}, Proposition~\ref{prop:regular-domain} \\
\addlinespace A normal form & Markings are length-preserving, so finite verifiability coincides with rationality of
the marking relation. Undecidability sits in presentations that do not convert effectively to a verifier. &
Proposition~\ref{prop:collapse}, Section~\ref{sec:real} \\
\addlinespace Closure & Composition over a labeled stage, restriction to a regular schema, splitting at character-class
changes and byte-fallback union of literal policies keep a policy in the class, and every composite inherits all six
decisions. & Theorem~\ref{thm:calculus} \\
\addlinespace Transfer & Restriction preserves certificates, and a splitter forces them at class changes. A pipeline's
token starts are a subset of its inner stage's. A certificate lifts through a pipeline exactly when a product of two
stage-local automata is empty, with a per-label sufficient condition as a corollary. A tower's certificate factors
through its levels on the tower's own domain. & Propositions~\ref{prop:monotone} to~\ref{prop:subset},
Theorem~\ref{thm:lift}, Corollary~\ref{cor:labels}, Proposition~\ref{prop:stack} \\
\addlinespace Weighted theory & Under rational, unifilar, almost surely stopping sources the tokenizability and
differential masses of Section~\ref{sec:weighted} are exact rationals, as are the unnormalized expected length and
expected token, mismatch and boundary counts, and every rate whose expected length is positive. Differential mass is
nonzero exactly when a boundary differential lies in the source's support. & Theorem~\ref{thm:weighted},
Proposition~\ref{prop:positivity} \\
\addlinespace Instances & Literal and regular-token maximal munch, canonical byte-pair encoding and two-mode WordPiece
carry the whole interface, each validated against an independent reference scanner or encoder. Under maximal munch a
certified window makes a cache keyed on its cut sound for continuations that keep the window and that the policy
tokenizes completely. & Section~\ref{sec:interface}, Proposition~\ref{prop:cache} \\ \bottomrule
\end{tabular}
\caption{Results under the hypothesis of Definition~\ref{def:fv}. Source comments or the gate name each executable
check, and the outside lemma and the collapse proposition are proofs alone.} \label{tab:results}
\end{table}

Table~\ref{tab:results} summarizes the results and names the statements that carry them. Three points of scope hold
throughout and are stated here once. First, the theorems assume a \emph{supplied} verifier, presented as its states,
transitions, acceptance and alphabet. A mere promise that some program computes a policy in the class is not enough,
since the reductions of Section~\ref{sec:real} leave the four questions without a uniform procedure. Second, membership
in the class is determined by the marking relation, whatever its presentation. A policy whose domain is not regular lies
outside the class, but a regular domain alone does not put it inside. Third, every validation sweep is bounded, in that
it checks every case within its declared universe and says nothing beyond it, so a sweep supports a theorem but never
replaces its proof.

\section{The definition and the interface}\label{sec:interface}


Every decision and construction here uses four methods, \code{start}, \code{step(state, byte, mark)} returning a
successor state or refusal, \code{accepts(state)}, and \code{alphabet}. The theorems cover an object with those four
methods when it presents a \emph{finite, effective} verifier, which means finitely many reachable states and a finite
alphabet, both enumerated once per decision. The executable modules take the alphabet's symbols as one-character
strings, and windows and the strings a decision returns are spelled over them. The verifier's accepted language must be
\emph{functional}, at most one accepted marking per byte string, and \emph{well formed}, every nonempty accepted marking
ending in a zero bit, the final-zero convention, so that it encodes a segmentation. A byte outside an object's own
alphabet leaves its domain by definition, and every decision treats that side as dead without stepping its verifier, so
the interface requires no behavior of a verifier on a foreign byte. Functionality can also be checked at the interface.
The self-product that guesses two markings of one byte stream, latches their first disagreement and accepts when both
markings are accepted with the latch set is empty exactly when the object is functional. The other half of being a
policy is that every nonempty accepted marking closes with a zero bit, which is read off the transitions into accepting
states. Nothing below inspects a policy's internals.

A marked string is best read as a scan's transcript. Take the token set $\{\code{ab}, \code{b}\}$ under maximal munch
and the input \code{bab}. Its unique segmentation is $\code{b} \cdot \code{ab}$, and its marked string pairs the bytes
\code{b}, \code{a}, \code{b} with the boundary bits $1, 0, 0$, one bit per byte saying whether a token boundary follows
that byte, so a set bit closes the current token and the next one begins at the following byte. The bit after the final
byte is always zero, since the string's end closes the last token and a boundary there would open an empty segment. The
verifier for this policy accepts exactly the marked strings whose bits close segments that are tokens, each the longest
token that begins at its position in the input, and it accepts nothing at all for an input like \code{ba}, which no
segmentation covers. The splitting companion's armed-run verifier carries the same information on the other side of each
byte, a mark on every token start, position zero included, so its mark at position $i+1$ is the bit here at position
$i$. A one-byte buffer translates either encoding into the other, so finite verifiability does not depend on the choice.
Everything the paper does is a question about the regular language of such transcripts.

Two conventions about the empty input are fixed here because later constructions lean on them. First, the empty input
carries no bits and admits one candidate segmentation, the empty sequence, but whether that segmentation is in a
policy's domain varies. The three tokenizer constructions below consume the empty input trivially and accept it. A
policy accepting nothing does not, and neither does a restriction to a schema that excludes the empty input. The
presentation-level reductions of Section~\ref{sec:real} need policies of exactly that kind. The verifier answers that
question from its own start state, and it must, since Lemma~\ref{lem:outside} reads the policy's domain off the
projection of the accepted language, and that projection contains the empty string exactly when the start state accepts.
Second, the start state is kept distinct from every state reachable after a byte. A fresh start state that no transition
re-enters ensures this. The point is that whether nothing has been consumed is a fact about the run, which a
configuration alone does not record. A derivative automaton for $\code{a}(\code{aa})^{*}$ returns to its initial
configuration after \code{aa}, and reading the return as an unconsumed state accepts two markings of that string. Under
the normalization the start state's acceptance is exactly the empty input's membership, and every product and
enumeration below uses it.

Three policies whose scanners share no code carry the interface, as a test of its generality. For literal maximal munch,
a verifier run pairs a candidate token with the length consumed so far. At the next proposed mark the segment the run
spans must be accepted by some token, and from then on the run is armed. A run that completes a token while armed has
found a longer match past a commit, which greedy scanning forbids. For regular-token maximal munch, the runs are
Brzozowski derivatives of the token expressions, armed the same way, and the tested sets include $\{a^{+}b,\ a\}$, whose
boundary bits no finite-state bounded-delay scanner can compute. For two-mode WordPiece, the verifier carries the mode
bit, and a closed segment flips it to continuation, which is the split of subword tokenizers between word-initial and
mid-word vocabularies. Each verifier is validated against an independent reference scanner, namely the direct greedy
scan, a longest-language-match reference over regular expressions, and the two-mode scan respectively, on every marking
of every string of bounded length. In each case the verifier accepts exactly the reference segmentation of each
tokenizable string and nothing for the untokenizable ones, over 1{,}530, 381, and 765 strings.

\section{The four decisions}\label{sec:four}


Four questions are decidable, three for every supplied verifier and equivalence for every pair of them, with the
counterexamples and constructive outputs stated below. The theorem decides them from supplied verifier presentations,
one finite construction per question. We first define the covering condition used in certification and in every transfer
law of Section~\ref{sec:transfer}.

\begin{definition}[Correct covering; certification]\label{def:cover} For a window $W$ of length $h$ and an origin
$0 \le o < h$, a policy $P$ \emph{covers} an occurrence of $W$ at position $p$ of a string in its domain
\emph{correctly} when $p+o$ is a token start of $P$'s segmentation and no token start lies in $(p+o,\,p+h-1]$.
Equivalently, the token covering the occurrence's final byte begins exactly $o$ bytes into it. $P$ \emph{certifies}
$(W,o)$ when it covers every occurrence of $W$ in every string of its domain correctly.
\end{definition}

\begin{theorem}[The four decisions]\label{thm:four} Let a well-formed functional finite verifier presenting a
segmentation policy be given as its states, its transition and acceptance functions, and its finite alphabet. The
following are decidable, uniformly in the verifiers given, and each answer comes with the object named.
\begin{enumerate}
\item \textbf{Certification.} Given a nonempty window $W$ and an origin $0 \le o < |W|$, whether the policy certifies
$(W,o)$, and when it does not, a string of the domain with an occurrence it covers wrongly.
\item \textbf{Boundary gap.} Whether the length of a boundary-free stretch is bounded over the policy's segmentations,
with the exact supremum when it is and a pumpable lasso of the verifier when it is not.
\item \textbf{Realizability.} Whether the domain is nonempty, and when it is, the mask that admits at each
\emph{completable} state, one from which an accepted completion exists, exactly the steps after which an accepted
completion remains, under which generation from any completable state never dead-ends.
\item \textbf{Equivalence.} Given two such verifiers, whether the two policies have the same domain and the same
segmentation on it, and separately each half of that: whether they segment every string of their common domain alike,
and whether their domains are equal. Each failure comes with a differing string, one in one domain and not the other or
one in both domains segmented apart.
\end{enumerate}
\end{theorem}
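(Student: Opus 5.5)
Each of the four parts reduces to emptiness, reachability, or cycle detection on a finite automaton built from the supplied verifier $V$ (states $Q$, start $q_0$, alphabet $\Sigma$). Every such construction is explicit, so decidability is uniform in the presentation. Witnesses are recovered as shortest accepted paths by breadth-first search. Throughout, the domain is the projection of $L(V)$ to bytes. By functionality, each string of the domain has exactly one accepted marking, so quantifying over the policy's segmentations is quantifying over accepted runs of $V$.

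\textbf{Certification.} I would build a product of $V$ with a small window monitor $M_{W,o}$ that reads marked symbols and nondeterministically guesses the position $p$ where an occurrence of $W$ begins. The monitor checks the next $h=|W|$ bytes against $W$ and records the relevant marks. The token start at $p+o$ is the mark at $p+o-1$, or the condition $p+o=0$, which the normalized start state makes visible. Token starts in $(p+o,p+h-1]$ are the marks at positions $p+o,\dots,p+h-2$. The monitor reaches a \emph{violation} flag when $p+o$ is not a start or some start lies in that interval, and afterwards it passes symbols freely. The product accepts when $V$ accepts and the flag is set. Its language is empty iff $P$ certifies $(W,o)$, and a shortest accepted word projected to bytes is the counterexample. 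The only delicate point is $p=0$ with $o=0$, where no preceding mark exists. This is handled by letting the monitor start its guess only from the start state, which the paper's convention keeps distinct.

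\textbf{Boundary gap.} First I trim $V$ to its useful states, those both reachable and co-reachable. Consider the subgraph of useful transitions carrying mark $0$. If it contains a cycle, then a path from $q_0$ to the cycle, the cycle pumped, and a path to acceptance give accepted markings with arbitrarily long boundary-free stretches; this is the lasso. Otherwise the subgraph is a DAG. The supremum is then the longest $0$-labelled path in it, counting the byte that carries the closing $1$ or the final $0$, and it is computed by dynamic programming over a topological order. Usefulness ensures every counted path extends to an accepted marking, so the supremum is attained.

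\textbf{Realizability.} The domain is nonempty iff some accepting state is reachable. The completable states are the co-reachable ones, computed by a backward fixpoint. The mask at a completable state $q$ admits the pair $(b,m)$ iff $\delta(q,b,m)$ is defined and completable. Generation under the mask stays in completable states by construction, and from each of them either acceptance or an admitted step exists, so it never dead-ends.

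\textbf{Equivalence.} I would form the synchronous product of $V_1$ and $V_2$, completing each with a sink, over marked symbols (the alphabets are unioned, and foreign bytes go to the sink). Domain equality is decided as equality of the two byte projections, which are NFAs. I would determinize or, more simply, build the product of projections with subset construction and search for a string in the symmetric difference. Agreement on the common domain reduces to emptiness of the self-product construction from Section~\ref{sec:interface}, applied across the two verifiers. That product guesses two markings of one byte string, runs the first in $V_1$ and the second in $V_2$, latches the first disagreement, and accepts when both accept with the latch set. Full equivalence is the conjunction of the two checks. Each nonempty product yields its differing string directly.

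The step I expect to be most error-prone is the boundary-gap supremum. Exactness depends on the off-by-one conventions: the final-zero rule, and whether the byte carrying the boundary bit counts toward the stretch. It also depends on restricting to useful states, so that unreachable or dead cycles are not reported as lassos.
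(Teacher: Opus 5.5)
Your proposal follows the paper's proof in all four parts: a nondeterministic window monitor run in product with the verifier for certification, the trimmed subgraph of zero-mark edges for the boundary gap, the co-accessibility mask for realizability, and the subset product and the latched cross-product for the two halves of equivalence.

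The step that needs repair is the one you flagged, the exact supremum. In the zero-mark subgraph, the byte carrying the final zero is itself a zero edge, so it already lies on the path. Adding a closing byte in both cases therefore overcounts every final stretch: the policy whose only segmentation is $\code{ab} \mapsto 00$ would get $3$ instead of $2$. The exact rule treats the two endings separately. A path of $k$ zero edges followed by a live mark-one edge is an internal stretch of $k+1$ bytes. A path of $k$ zero edges ending in an accepting state is a final stretch of exactly $k$ bytes. The supremum is the larger of the two maxima over the acyclic graph. You also need the convention for a domain that is empty or holds only the empty string. Such a domain has no stretch, the decision returns $0$ there by convention, and your claim that the supremum is attained fails in exactly those cases.

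There are two smaller points. First, in certification with $o = 0$ and $p > 0$, the token-start indicator at $p$ is the mark at $p-1$, which precedes the guessed occurrence. The monitor must therefore carry the last mark it has read at the moment it guesses. It can also record for itself whether any byte has been read, so it does not need the verifier's normalized start state.

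Second, the synchronous product over marked symbols that you build first already decides full equivalence. By functionality each accepted marked language is its policy's graph, so full equivalence is equality of two deterministic automata. That is decided, with a shortest differing marked string, in the size of the product, and it is the paper's first route. Your conjunction of the two halves is also correct, but it sends the full decision through the subset construction of the domain half.
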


Three remarks fix the statement's terms. Throughout, \emph{boundary gap} names the stretch of the second item, between
consecutive token boundaries, and its supremum. The leading stretch, from position zero to the first boundary, and the
trailing stretch, from the last boundary to the end of the input, count the same way, end of input being an implicit
close, and a string with no marked boundary is a single stretch. A domain that is empty or holds the empty string alone
has no stretch, and its supremum is zero by convention, a value the decision returns although no stretch attains it.
Realizability guarantees that generation never blocks, and it says nothing about whether a run stops. An empty domain
answers no, and on a nonempty one the computed mask is the positive answer's output. The supplied verifier is needed
even for the boundary gap's yes-or-no, which Proposition~\ref{prop:gap-bounded} shows undecidable from a program
promised to compute a policy of the class.

Two small examples illustrate the decisions. Over $\{\code{ab}, \code{b}\}$, the window $(\code{b}, 0)$ is
\emph{refuted}, since the input \code{ab} is one token whose \code{b} sits mid-token, and the decision's counterexample
is that very string. The window $(\code{ab}, 0)$ is \emph{certified}, since \code{ab} occurs in tokens only
token-initially, so every occurrence in every tokenizable input begins a token. The two halves of the equivalence
decision separate cleanly on $\{\code{ab}, \code{b}\}$ against $\{\code{ab}, \code{b}, \code{a}\}$. Adding the one-byte
token creates a domain difference, the differing string \code{a}, tokenizable only on one side, while the boundary half
finds no \emph{boundary differential}, no commonly tokenizable string the two segment apart, because on every commonly
tokenizable string the added token never wins a longest match. The two policies segment their common domain alike and
differ only in which strings they tokenize. Domain difference and boundary differential are independent findings, and
the deciders separate them with differing strings.

\begin{proof} The verifier is finite, and each property is decided by a finite computation
over the verifier or a product with it, namely reachability or emptiness for certification and equivalence, a cycle
search and a longest path for the boundary gap, and a co-accessibility mask for realizability. Each returned string is
read back from its search path. Throughout, functionality is used freely. A policy assigns at most one segmentation per
string, and the verifier accepts exactly the graph, so an accepted marking \emph{is} the policy's segmentation of its
byte projection.

\emph{Certification.} Fix the window $W$ with $h = |W|$ and origin $o$. In a marked string the token covering position
$j$ begins at the last token start at most $j$, which is why the two readings of Definition~\ref{def:cover} agree.
Correct covering of the occurrence at $p$ comes down to two facts about the $h$ token-start indicators of the
occurrence's positions. Build the product of the verifier with a matcher that nondeterministically designates one
occurrence of $W$ in the byte projection and records those bits as it passes. A state is violating when its designated
occurrence has completed with the covering condition false. The pair is refuted exactly when a violating state can
complete to acceptance, which is reachability in a finite graph, and the reachability path projects to a violating input
with its own segmentation as the exhibit.

\emph{Boundary gap.} Restrict the verifier to its live states, those reachable from the start and from which acceptance
is reachable, since only live states occur in accepted runs. Consider the subgraph of transitions whose mark bit is
zero. If it contains a cycle through a live state, pumping the cycle inside any accepting run through that state
produces accepted markings with arbitrarily long boundary-free stretches, so the boundary gap is unbounded. The output
is the lasso, a pumpable family, no single string exhibiting unboundedness. If it is acyclic, each boundary-free stretch
is measured along its zero-edge path, its closing byte counted where one exists. An internal stretch of $k$ zero edges
ends in a mark-one byte and spans $k+1$ bytes; a final stretch ends at accepting end of input and spans its $k$ zero
edges alone, the terminal byte carrying the non-closing bit. The supremum is the larger of the two maxima over the
finite acyclic graph, computable and, wherever a stretch exists at all, attained. The empty and empty-string-only
domains take the conventional zero, which no stretch attains. Measuring the zero path alone would undercount every
internal stretch by its closing byte. The policy $\{\code{aab} \mapsto 010\}$ segments as $\code{aa}\,|\,\code{b}$ with
boundary gap two against a longest zero path of one.

\emph{Realizability.} Call a state completable, or co-accessible, when some marked suffix leads from it to an accepting
state. Completability is a reverse reachability fixpoint, computed once. The mask admits, at state $q$, exactly the
pairs $(\sigma, m)$ with $\code{step}(q, \sigma, m)$ completable. Suppose generation stands at a completable state $q$.
If $q$ accepts, generation may stop; if not, completability of $q$ names a marked suffix to acceptance, whose first step
the mask admits, so at least one admitted step exists and preserves completability. A masked generator therefore never
dead-ends, and the mask is computable from the fixpoint. The hypothesis that generation begins at a completable state is
needed, since a verifier accepting nothing is a policy of this class whose start state completes to nothing. Its mask
offers neither a stop nor a step, and the decision answers no there, since a guarantee would hold only vacuously. Once
co-accessibility is computed the mask follows directly, and realizability is kept among the four decisions because the
interface makes its guarantee available wherever the other three apply.

\emph{Equivalence.} Three terms are fixed first and used in these senses throughout. Boundary equivalence is agreement
of the two segmentations on the common domain. A domain difference is a string tokenizable under exactly one policy.
Full equivalence is boundary equivalence together with equal domains, and the theorem's fourth item asks about it first.
The reductions of Section~\ref{sec:real} refute it with differing strings that lie in only one domain.

Full equivalence is language equality, and that is the first decision. Functionality gives a policy exactly one accepted
marking per string of its domain, so a verifier's accepted language is the policy's graph, the marked strings it accepts
being the pairs of a domain string and that string's own segmentation. Two policies are therefore fully equivalent
exactly when their accepted marked-string languages are equal. In one direction, equal languages have equal byte
projections, and those projections are the two domains. On a string both domains contain, the marking the first side
accepts lies in the second side's language too, and the second accepts exactly one marking of that string, so the two
segmentations agree there. In the other, unequal languages carry a marked string one side accepts and the other does
not, whose bytes either leave one domain or are segmented apart, and either way the policies are not fully equivalent.
Both verifiers are deterministic over the alphabet of (byte, bit) pairs, so this is equality of two deterministic
automata. Build the product over that alphabet, a byte outside one policy's own alphabet sending that side to a dead
state without its verifier being stepped, and search the product for a reachable pair in which exactly one side accepts.
Breadth-first search returns the shortest such marked string, or reports that no such pair is reachable. The search
visits each product state once, so the decision costs the product's size. Its states are the pairs of the two verifiers'
states with a dead state adjoined on each side, and out of each pair it has one edge per bit for every byte of the union
alphabet.

The two halves are decided separately, each by a product of its own, and the same products read a differing string for
the full decision. The decision also reports whether a differing string separates the domains or the segmentations,
since the reductions of Section~\ref{sec:real} separate on the domain half and the pipeline applications need to know
which half fails. Boundary equivalence is decided by the first product. Run both verifiers over one byte stream with
independently guessed marks $(m_1, m_2)$ and a flag that latches when $m_1 \neq m_2$. A byte outside one policy's own
alphabet is outside its domain, so no jointly accepted stream contains one, and this product walks the intersection of
the two alphabets. If a run reaches joint acceptance with the flag set, functionality turns the two accepted markings
into the two policies' own segmentations of the same input, and they differ at the latched position, which gives a
boundary differential with the input synthesized from the path. Conversely a differential input yields such a run by
guessing both true markings. Emptiness of flagged joint acceptance is boundary equivalence.

Domain difference is the other half, and it needs a product of its own, since the first accepts only strings both
policies tokenize. That product walks the union of the two alphabets, the side whose alphabet lacks the byte recorded as
dead while the other runs on. Each policy's domain is the byte projection of its verifier's language, regular by
projection, so the difference of the two domains is regular and its emptiness decidable. The construction that yields a
differing string is the subset product. Track, for each side, the set of verifier states its surviving marking guesses
can occupy after the bytes read so far, which is that side's subset construction and finite. The string read so far is
in one side's domain exactly when some state in its set accepts, so a reachable pair of sets in which one side accepts
and the other does not marks a differing string, and the bytes labeling the path to it reconstruct the string.
Breadth-first search over pairs of subsets terminates and returns the shortest such string, or reports that the domains
coincide.
\end{proof}

The generic deciders are validated against the purpose-built ones they generalize. Certification is checked against the
pending-automaton product on 36 triples, and equivalence against the armed-run product on 9 regex pairs with every
differing string re-verified. The boundary gap is checked against the known literal boundary gaps, that is, each set's
longest token, and against the unbounded verdict on the unbounded-delay set. The realizability mask is checked to leave
an admitted step at every nonaccepting completable state, on one policy of each kind. The generic equivalence decision
also proves a literal policy equivalent across representations to the regular policy denoting the same token languages.

For maximal munch the splitting companion's two instruments, the product of two pending automata for literal policies
and the armed-run product for regular ones, are validated against brute force on 225 and 25 ordered pairs. The
language-equality decision is compared with the two-half decision on all 251 pairs, the two pools together with the
cross-representation pair. One product over marked strings returns a differing string exactly where the two-half
decision returns one, and each of the 228 shortest differing strings is replayed against the reference scans, 212 of
them separating the domains and 16 the boundaries.

The boundary gap is not the anchor-free span of the splitting companion, which counts the longest run of positions
carrying no anchor from a fixed inventory and can be unbounded where the boundary gap is bounded. Over the greedy token
set $\{\code{aa}, \code{a}\}$ the longest boundary-free stretch is two, the longest token, while every unary window
$\code{a}^{k}$ occurs in $\code{a}^{k+1}$ at starts of both parities, so by parity no window fixes the offset at which
its covering token starts, and the anchor-free span is unbounded. The execution refutes every one of the 10
window-and-origin pairs over that set's one-letter alphabet through length four.

\begin{theorem}[Cut independence]\label{thm:cut} Let a well-formed functional finite verifier presenting a policy $P$
be given as in Theorem~\ref{thm:four}, and let $(W, o)$ be a pair it certifies. It is decidable whether every certified
cut is independent: whether for every $x$ in the domain and every occurrence of $W$ in $x$ at $p$, with $c = p + o$,
$u = x[0..c)$ and $v = x[c..)$, the prefix $u$ lies in the domain with $P(u)$ equal to $P(x)$ restricted to $u$, and the
suffix $v$ lies in the domain with $P(v)$ equal to $P(x)$ restricted to $v$ and shifted by $c$. When some certified cut
is not independent, a shortest such $x$ is returned with the occurrence and the failing side or sides.
\end{theorem}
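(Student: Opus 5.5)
The plan is to reduce non-independence to nonemptiness of a regular language of marked strings built from the verifier $A$, exactly as in the certification part of Theorem~\ref{thm:four}. We then read a shortest witness off a breadth-first search.

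Because $P$ certifies $(W,o)$, the cut $c=p+o$ is always a token start of $P(x)$, so $P(x)$ restricted to $u$ and to $v$ are well-formed segmentations. The restriction to $u$ ends in a zero bit once the boundary bit at $c-1$ is cleared. The restriction to $v$ is shifted by $c$. Since $A$ accepts exactly the graph of $P$, the condition ``$u$ lies in the domain and $P(u)$ equals the restriction'' is equivalent to $A$ accepting the marked string $\mu(x)[0..c)$ with its last bit cleared. That holds because functionality makes any accepted marking of $u$ the segmentation $P(u)$. Likewise the suffix condition is equivalent to $A$ accepting $\mu(x)[c..)$. The degenerate cases $c=0$ and $u$ empty are handled by the start state's acceptance, which is exactly the convention fixed in Section~\ref{sec:interface}. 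So a failing cut is a marked string $\mu(x)$ accepted by $A$ that has a designated occurrence of $W$ at $p$ with $\mu(x)[0..c)^{\flat}\notin L(A)$ or $\mu(x)[c..)\notin L(A)$, where $\flat$ clears the final bit.

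The construction is a nondeterministic product over marked strings with the following components:
\begin{itemize}
\item the main copy of $A$;
\item a matcher that guesses the occurrence start $p$ and checks that the next $h$ bytes spell $W$;
\item a \emph{prefix copy} of $A$ that runs in lockstep until the cut, where it evaluates, on the bit at $c-1$ replaced by $0$, whether it lands in an accepting state, and freezes that verdict;
\item a \emph{suffix copy} of $A$, started afresh from its start state at position $c$, which runs on the remaining marked suffix in parallel with the main copy.
\end{itemize}
Position $c$ lies inside the guessed occurrence, at offset $o$. The matcher and both copies can therefore be synchronized by having the matcher guess $p$ and count offsets, so the cut is located by the same guess. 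Acceptance requires:
\begin{itemize}
\item the main copy accepts;
\item the occurrence is completed;
\item the frozen prefix verdict is reject, or the suffix copy ends in a non-accepting state.
\end{itemize}
Determinism of $A$ makes ``not accepted'' a state property, so no complementation blow-up is needed beyond tracking the deterministic copies. The product is finite, of size $O(|Q|^{3}\cdot h)$. Emptiness is reachability, and a breadth-first search over the product returns a shortest accepted marked string. Its byte projection is a shortest failing $x$, the path records $p$, and the two frozen verdicts report which side or sides fail.

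The step I expect to need most care is the prefix side's last bit. The marking of $x$ has a $1$ at $c-1$ when $c>0$, while the prefix's own segmentation must end in $0$. The prefix copy must therefore branch on the clearing \emph{before} stepping on position $c-1$, and only then commit. Equally, when $c=0$ the prefix is empty, and its verdict is the start state's acceptance.

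A second subtlety is shortestness across the three ways to fail. Because all three failures are folded into one accepting condition of a single product, the breadth-first search yields the globally shortest $x$. The failing sides are then read from the witness path. Breadth-first search alone may give only one of possibly several occurrences, so for the reported $x$ we re-evaluate every occurrence and side directly on it to list all failing sides.
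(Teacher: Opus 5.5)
Your proposal is correct and is essentially the paper's construction. The paper builds a product of the verifier with a matcher for the window. At a nonzero cut, the byte before the cut is read with bit one by the whole-string run and with bit zero by the prefix test, whose failure is latched. A copy restarted at the cut reads the suffix. A witness is a path on which the occurrence completes and the continuing copy accepts, with either the latch set or the restarted copy rejecting, and breadth-first search returns a shortest one.

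The one difference is that your separate prefix copy is redundant. Before the cut it reads exactly the marks the main copy reads, so it occupies the same state. That is why the paper steps the main copy's own state with bit zero, which gives a product quadratic rather than cubic in $|Q|$. The paper also makes placing the cut a nondeterministic choice on the byte $c-1$ itself. This is your timing remark made precise, and it also covers $o=0$ with $p>0$, where that byte precedes the occurrence your matcher guesses.
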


A certificate makes $c$ a boundary of the whole string's segmentation, and that is all it says. Whether each side scans
alone is a further question, and the decision separates the two. The construction is a product over the verifier in two
phases. Before the cut it carries the verifier's state and a matcher for $W[0..o)$, so that the cut, placed after a
marked byte, is where an occurrence may begin $o$ bytes earlier. At a nonzero cut the product holds two states for the
byte immediately before it. The whole-string run reads that byte with bit one and continues, while the prefix test reads
the same byte from the same state with bit zero, the last bit of the marked $u$ under the final-zero convention, and
asks whether the state reached accepts, which by functionality holds exactly when $P(u)$ is $P(x)$ restricted to $u$. A
failure is latched and the search goes on, because a policy whose cut state does not accept can still have $x$ in its
domain. A cut at $0$, from an occurrence at $0$ of a window certified at origin $0$, has no byte before it, so both
copies then begin at the start state, and the empty prefix fails exactly when the start state does not accept. After the
cut the same marked suffix is read by two copies of the verifier, one continuing and one restarted from the start state,
while a matcher confirms that $W[o..)$ follows at once. The restarted copy accepts exactly when $P(v)$ is $P(x)$
restricted to $v$. A path on which the occurrence has completed and the continuing copy accepts, with the flag set or
the restarted copy rejecting, projects to a counterexample with its own segmentation, and breadth-first search returns
the shortest. Write $Q$ for the verifier's state set and $\Sigma$ for the byte alphabet. With $h = |W|$, the product has
at most $|Q|(o+1) + 2|Q|(|Q|+1)(h-o+1)$ states, each with at most $3|\Sigma|$ out-edges, since besides its two mark
choices a byte may also offer the choice to place the cut.

Over 494 small universes the decision is held to brute force on 5{,}627 certified pairs, every verdict equal through
length 8 and all 62 counterexamples replayed by the reference scans. Every cut is independent on every tested certified
pair of a literal, regular or byte-pair policy, 5{,}256 pairs of 469 literal sets, 56 of the five regular sets and 48 of
the five byte-pair towers. These cases check the splitting companion's theorem for maximal munch, and a tower, with one
literal maximal-munch stage per merge, inherits it boundary by boundary. Independence fails where a scan carries state
across a boundary that the suffix alone cannot recover, as it does on 21 of the 59 certified WordPiece pairs, where the
restarted scan begins in the initial mode, and on 41 of the 68 certified pairs of the six threshold splices of
Section~\ref{sec:real}, where a side shorter than the threshold is segmented under the other policy.

Over $\{\code{a}, \code{ab}\}$ with continuation $\{\code{b}\}$ the pair $(\code{bb}, 1)$ is certified and \code{abb},
segmented $\code{ab} \cdot \code{b}$, is cut at $2$ into a prefix the policy segments the same way and a suffix \code{b}
that is not in its domain at all. For a policy that is not maximal munch, splitting at every cut of an inventory, a set
of certified pairs, needs more than the certificates, and the property it needs is exactly the one the next theorem
decides.

\begin{theorem}[Chunk independence]\label{thm:chunk} Let a well-formed functional finite verifier presenting a policy
$P$ be given as in Theorem~\ref{thm:four}, and let $I$ be a nonempty finite set of pairs it certifies, an
\emph{inventory}. It is decidable whether $I$ is chunk-independent: whether for every $x$ in the domain, with anchors
$c_1 < \dots < c_n$ the positions $p + o$ for every pair $(W, o)$ of $I$ and every occurrence of $W$ in $x$ at $p$, each
of the chunks $x[0..c_1)$, $x[c_1..c_2)$, \dots, $x[c_n..)$ lies in the domain with $P$ of it equal to $P(x)$ restricted
to it and shifted by its start, an anchor at $0$ leaving an empty first chunk, in the domain exactly when the verifier
accepts its start, and a string with no anchors, $n = 0$, its own sole chunk. When $I$ is not chunk-independent, a
shortest such $x$ is returned with its anchors and the chunks that fail. If every pair of $I$ is cut-independent in the
sense of Theorem~\ref{thm:cut} then $I$ is chunk-independent; the converse does not hold, even for an inventory of one
pair.
\end{theorem}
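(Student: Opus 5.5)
The decision is a product construction in the style of Theorem~\ref{thm:cut}, with one change: a single restarted copy cannot be fixed once, because chunks follow one another. The product reads the marked string $x$ with one continuing copy of the verifier. It also runs, for every pair $(W,o)$ of $I$, a deterministic matcher for $W$ that reports, at each position, whether an occurrence of $W$ begins exactly $o$ bytes earlier. These matchers need a delay of at most $\max_{(W,o)\in I}|W|$ bytes to confirm an anchor, so the product also buffers that many pending positions.

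Alongside these it carries one chunk copy of the verifier, started at the start state. At a confirmed anchor $c$ the chunk copy is tested and then restarted. Because the anchor is confirmed only after $W$ has finished, the chunk copy has to be run on the buffered bytes in a delayed way. For the test, the chunk copy must have read the byte before $c$ with bit zero and reached an accepting state. Meanwhile the continuing copy reads that byte with bit one, which certification already forces.

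I would implement the test by keeping, for each buffered position, the pair consisting of the chunk state with the last bit set to zero and the chunk state with it set to one. When an anchor is confirmed in hindsight, the product checks acceptance of the zero variant and restarts the chunk copy from the start state on the buffered suffix. A failure sets a latch. An anchor at $0$ tests the start state's acceptance directly. At end of input the last chunk copy must accept.

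A path on which the continuing copy accepts and the latch is set, or the final chunk copy rejects, projects to a counterexample. Breadth-first search over the finite product returns a shortest one. Recording which anchor tripped the latch gives the failing chunks. Correctness follows from functionality exactly as in Theorem~\ref{thm:cut}: an accepting chunk state means that $P$ of the chunk equals $P(x)$ restricted to it. The main obstacle is making the delayed-anchor bookkeeping finite and exact. I would try first to avoid it entirely by guessing anchors nondeterministically, as Theorem~\ref{thm:cut} guesses its cut, and then verifying each guess with the matchers. Exactness also requires that no unguessed anchor occurs, so the product must verify the absence of every other occurrence too. That check is a complement, which is regular because the set of occurrences of finitely many windows is locally testable. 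I would handle it by determinizing the product.

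For the implication, let every pair of $I$ be cut-independent, let $x$ lie in the domain, and take anchors $c_1<\dots<c_n$. The argument is an induction on $n$. Cut independence at $c_n$ gives $x[c_n..)$ in the domain with the restricted segmentation, and gives $u=x[0..c_n)$ in the domain with $P(u)=P(x)$ restricted to $u$. The delicate point is that the occurrences producing $c_1,\dots,c_{n-1}$ must still occur in $u$, since the window at $c_i$ may extend past $c_n$. I would handle this by cutting at $c_1$ first and recursing on the suffix $x[c_1..)$. Every occurrence that produces a later anchor starts at or after $c_1-o$. I still need to check that it lies wholly inside the suffix, and where it does not, to cut at that occurrence directly in $x$ and compose the restrictions. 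Composition works because restriction and shifting commute.

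For the failure of the converse with a single pair, I would look among the WordPiece or threshold-splice examples for a certified $(W,o)$ that is not cut-independent. The wanted case is one where every $x$ containing $W$ contains a second overlapping or adjacent occurrence, so that a failing side is always subdivided further into chunks that do scan alone. Concretely, I would search the example's small universe by the two deciders and exhibit the witness that one of them returns.
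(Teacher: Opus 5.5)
The genuine gap is in the implication, and the converse is left without a witness.

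\textbf{The decision.} Your procedure works in its guessing form, and more simply than you suggest. The check that no anchor went unguessed is not a complement. A product that remembers the guess bits of the last $h$ positions can reject deterministically at the end of any occurrence whose anchor was not guessed, so no determinization is needed. Your eager variant, by contrast, would need the paper's device of feeding the chunk copy with a lag of $h$ bytes. Anchors of pairs with different $|W|-o$ can be confirmed out of order, and a test at a later anchor confirmed first would run on the wrong chunk.

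\textbf{The implication.} You set aside the induction from the right because the window at $c_i$ may extend past $c_n$, but certification forbids exactly that. Suppose $(W,o)$ anchors $c_i = p+o$. Its certificate puts no token start of $P(x)$ in $(c_i,\,p+|W|-1]$. Every later anchor is a token start of $P(x)$ greater than $c_i$, so $c_{i+1} \ge p+|W|$ and the occurrence lies inside $x[0..c_{i+1})$. With that observation your first route closes at once:
\begin{itemize}
\item cut independence at $c_n$ puts $u = x[0..c_n)$ in the domain with $P(u)$ the restriction of $P(x)$;
\item the anchors of $u$ are exactly $c_1,\dots,c_{n-1}$, none lost by the observation and none gained, since an occurrence in $u$ is one in $x$;
\item induction on $n$ finishes.
\end{itemize}
The paper says the same thing directly for a middle chunk, applying the pair anchoring $c_i$ to the prefix $x[0..c_{i+1})$.

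The left-to-right recursion you switch to is where the difficulty is real. A certificate says nothing about token starts in $[p,\,c_i)$, so an occurrence anchoring a later $c_j$ can begin before $c_1$, and then $x[c_1..)$ has fewer anchors than $x$ does. In \code{aaaa} below, the occurrence anchoring $3$ begins at $1$, before the anchor $2$. Your sketch leaves that case open, and closing it needs the same observation.

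\textbf{The converse.} This needs a witness, and proposing to search for one does not supply it. The paper's witness is two-mode WordPiece with initial tokens $\{\code{a}, \code{aa}, \code{ba}\}$, continuation tokens $\{\code{a}\}$ and the single pair $(\code{aaa}, 2)$.
\begin{itemize}
\item The pair is certified.
\item It is not cut-independent: \code{aaaa} is segmented $\code{aa}\cdot\code{a}\cdot\code{a}$, while its suffix \code{aa} at the cut $2$ is read alone as one initial token.
\item The one-pair inventory is nonetheless chunk-independent. The overlapping occurrences anchor every position of a run of \code{a}'s from the third on, so every chunk after the first is a single byte.
\end{itemize}
The shape you look for, every string containing $W$ containing a second occurrence, is stronger than needed and fails here, since \code{aaa} contains only one occurrence. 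What matters is only that every side on which a cut fails is cut further by other anchors.
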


Parallel scanning cuts at every anchor of an inventory and scans each chunk alone, and since a middle chunk has a cut at
both ends, the exact condition it needs is chunk independence of the whole inventory, for which cut independence of
every pair in the sense of Theorem~\ref{thm:cut} is sufficient but not necessary. The implication is short. A string
with no anchors is its own sole chunk, which lies in the domain with its own segmentation. Otherwise the first and the
last chunk are the sides of the first and the last cut. For a middle chunk $x[c_i..c_{i+1})$, the prefix $x[0..c_{i+1})$
lies in the domain segmented as $x$ is, by the pair anchoring $c_{i+1}$. The occurrence anchoring $c_i$ lies within that
prefix, since its certificate makes the token covering the window's final byte begin at $c_i$ and that token ends at or
before the boundary $c_{i+1}$. The pair anchoring $c_i$, applied to the prefix, therefore puts the chunk in the domain
segmented as the prefix is.

The converse fails because the other anchors, another pair's or the window's own overlapping occurrences, can cut a
string so finely that no chunk is misread, although a whole side would be. Over the initial tokens
$\{\code{a}, \code{ab}, \code{b}\}$ with continuation $\{\code{a}, \code{b}\}$ the pair $(\code{aa}, 1)$ is certified
and not cut-independent, the suffix \code{ab} of \code{aab} reading as one token alone, while the inventory
$\{(\code{aa}, 1), (\code{aab}, 2)\}$ is chunk-independent, the second pair anchoring the \code{b} of every \code{aab}
so that every chunk a restarted scan could read differently is the single byte \code{a}. Over
$\{\code{a}, \code{aa}, \code{ba}\}$ with continuation $\{\code{a}\}$ the pair $(\code{aaa}, 2)$ is certified and not
cut-independent, the suffix \code{aa} of \code{aaaa} reading as one token alone, while the inventory of that one pair is
chunk-independent, its overlapping occurrences anchoring every position of a run of \code{a}'s from the third on, so
that every chunk after the first is a single byte.

The decision is a product over the verifier with two copies, one reading $x$ and one reading the current chunk. The
chunk copy is fed with a lag of $h$ bytes, $h$ the length of the longest window, so that when a byte reaches it every
occurrence that could anchor the position after that byte has completed. At an anchor the chunk copy reads the byte with
bit zero, its acceptance is tested, a refusal is latched, and the copy is restarted. Every state is tested as a possible
end of $x$ by committing the held bytes with no further input, and it is violating when the copy reading $x$ accepts and
some chunk has been refused. A product state records the two verifier states, the up to $\max(h-2,0)$ bytes kept ahead
of the lagged ones so that every occurrence can be matched, the up to $h$ marked bytes the lag holds, a bit for the
first commit and the refusal flag, so the product has at most
$4|Q|(|Q|+1)\sum_{k=0}^{\max(h-2,0)}|\Sigma|^k\sum_{j=0}^{h}(2|\Sigma|)^j$ states with $2|\Sigma|$ out-edges each, a
count every product of the sweep is checked against.

Over 498 small universes the decision is held to brute force on 29{,}532 inventories of one to three certified pairs,
and 29{,}530 verdicts agree with brute force through length 8, while the remaining two are refutations whose
counterexamples have length 9, past the brute-force bound. All 1{,}588 counterexamples replay on the reference scans,
and the 1{,}586 within the bound have the brute-force shortest length. The 27{,}896 inventories with every pair
cut-independent are all chunk-independent, and 48 inventories are chunk-independent although one of their pairs is not
cut-independent, 3 of them inventories of a single pair. Every tested inventory of a literal, regular,
threshold-restriction or byte-pair policy is chunk-independent, 23{,}807 inventories over the 469 literal sets, 784 over
the five regular sets, 1{,}541 over the six restrictions and 916 over the five byte-pair towers, which is the splitting
companion's theorem for maximal munch; 442 of the 485 WordPiece inventories and 822 of the 1{,}577 threshold-splice
inventories are not, for the reasons the single cuts fail.

An inventory is fit for parallel scanning exactly when this decision says so. For a policy that is not maximal munch
this decision is the question to ask, cut independence of its pairs being only the sufficient condition that
Theorem~\ref{thm:cut} checks one pair at a time.

\section{Where the class sits}\label{sec:real}


The finitely verifiable class strictly contains the online-scannable policies and is strictly contained in the policies
with a regular domain. Over the marking encoding it is exactly the rational segmentation policies. The undecidability of
general transducers concerns relations that are not functions, and what remains undecidable here sits in how a policy is
presented.

\paragraph{Online scanners are inside.} An \emph{online scanner} here is a deterministic finite-state device reading
bytes left to right and emitting each boundary bit within a bounded delay. Devices with unbounded buffering are excluded
from this definition. Every online scanner's policy is finitely verifiable. For a scanner with delay $d$, run it beside
the proposed marks, hold the last $d$ bits, and compare each emitted bit when its delay expires. The buffer is finite,
so the inner region of Figure~\ref{fig:landscape} sits inside the class by construction. Some policies have no online
scanner yet are finitely verifiable. Over the token set $\{a^{+}b,\ a\}$, the scan of $a^n$ commits nothing until the
input ends, and the scan of $a^n b$ commits everything as one token. The first commit's delay grows with $n$, so no
online scanner in the sense above computes its boundary bits. The armed-run verifier covers the set regardless,
validated on 1{,}020 strings, 255 for each of four regex sets, with every candidate marking tried. For this set,
repeated verifier states also let the instrument shorten certification counterexamples while preserving the violating
occurrence. The separation is not new, and neither is this kind of example. \citet{karttunen1996directed} observes the
same obstruction for the replacement of $a^{+}b$ by $x$, whose transducer is unambiguous and yet cannot be
sequentialized, because no finite-state device can accumulate an unbounded amount of delayed output. The four decisions
here hold on both sides of that frontier.

\paragraph{The frontier is decided.} Classical transducer results decide whether a supplied verifier's policy admits an
online scanner. A scanner has delay $k$ when it emits the bit of each position after reading at most $k$ further bytes,
the bits still pending at end of input being emitted there. Write $V$ for the supplied verifier. Replace every edge
$(\sigma, m)$ of $V$ by the transducer edge $\sigma \mid m$ and trim the result. This transducer $T_V$ is
letter-to-letter by construction and, since the verifier is functional, realizes the policy's partial boundary function,
which is length-preserving. By \citeauthor{choffrut1977sequential}'s theorem \citep{choffrut1977sequential}, in the form
\citet{beal2002determinization} state it, that function is subsequential exactly when $T_V$ has the twinning property,
which is decidable in polynomial time in the size of $T_V$ \citep{weber1995economy,beal2003squaring}. When it fails, the
failure is shown by a pair of same-input paths from the initial state reaching two states together with same-input loops
at them along which the relative output delay changes. When the test succeeds, take a trim subsequential realization.
Every cycle on one of its successful runs reads as many bytes as it writes bits, since pumping a cycle that did not
would produce an accepted pair of unequal lengths, so the lag between input and output is bounded.
\citet{frougny1998synchronisation} turn a subsequential transducer of bounded delay into an on-line automaton, one that
is letter-to-letter after an initial period, which is a scanner emitting each bit within a bounded delay. Conversely, a
bounded-delay scanner with its buffer of pending bits is a subsequential transducer of bounded delay. Whether a supplied
verifier's policy lies in the inner region of Figure~\ref{fig:landscape} is thus decidable in polynomial time from the
verifier.

Inside that region the delay is decided level by level. Projecting $T_V$ onto its bytes, each edge keeping its bit,
gives a nondeterministic automaton whose accepting runs are unambiguous by functionality. After trimming, two edges with
the same byte, source and target cannot carry different bits, since a common path into the source and out of the target
would otherwise give one byte string two accepted markings, so the projected automaton's choice of successor determines
the bit. In the other direction, a scanner with delay $k$ fixes the bit from the state and the lookahead alone. Let $q$
be a state of the trimmed automaton, $u$ a byte string reaching it, and $v$ and $v'$ continuations from $q$ to
acceptance that agree on their first $k + 1$ bytes. The words $uv$ and $uv'$ lie in the domain, each with one marking,
and they agree through the byte at which a delay-$k$ scanner emits the bit of the byte after $u$, so the scanner emits
the same bit for both, and that bit is the one each marking carries there. The bit at $q$ is therefore a function of $q$
and the next $k + 1$ bytes, and by determinism of $V$ so is the successor. When fewer than $k + 1$ bytes remain, the
lookahead is the whole remaining suffix, and acceptance is tested on that suffix alone and on no longer continuation it
begins. If the input ending there is accepted, its one marking fixes the bit and the successor, and if it is not, no
accepted run passes this way and the choice is free, as it is for any lookahead that begins no accepted continuation. A
missing transition goes to a rejecting sink, which makes the automaton total as the definition reads it. A scanner with
delay $k$ is exactly a $k$-lookahead delegator for that automaton, choosing the successor forced on every accepted
continuation that shares the lookahead, which \citet[Corollary~8]{loding2013delegators} decide in polynomial time for
each fixed $k$ over the fixed byte alphabet. A later erratum finds Lemma~3 of that paper faulty, confirms Lemma~7 and
Corollary~8, and proves Corollary~8 without that lemma \citep{loding2023errata}.

The levels are distinct, since for $k \ge 1$, maximal munch over $\{a,\ a^{k}b\}$ has delay exactly $k$. A token
starting at a byte is $a^{k}b$ exactly when the next $k$ bytes are $a^{k-1}b$, so $k$ further bytes settle every bit,
and fewer do not, since $a^{k+1}$ and $a^{k}b$ share their first $k$ bytes and differ in their first boundary bit. These
results locate finite verifiability relative to bounded-delay scanning. The decisions require only a verifier and apply
whether or not such a scanner exists.

\paragraph{Non-regular domains are outside.} The class also has an outer boundary that is simple to state.

\begin{lemma}[Non-regular domains are outside]\label{lem:outside}
If a policy's set of tokenizable strings is not a regular language, the policy is not finitely verifiable.
\end{lemma}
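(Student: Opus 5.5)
We argue by contraposition: if a policy is finitely verifiable, its set of tokenizable strings is regular. Fix a deterministic finite verifier $V$ over the alphabet $\Sigma \times \{0,1\}$ of marked bytes that accepts exactly the policy's own segmentations. Its accepted language $L(V)$ is regular.

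The key observation is that the domain is the byte projection of that language. Let $\pi : \Sigma \times \{0,1\} \to \Sigma$ forget the bit, extended letter by letter to strings. A string $x$ is tokenizable exactly when the policy assigns it a segmentation. That segmentation is a marking of $x$, and by Definition~\ref{def:fv} $V$ accepts it, so $x \in \pi(L(V))$. Conversely, every marked string that $V$ accepts is by definition one of the policy's segmentations of a tokenizable string, so its projection lies in the domain. Hence the domain equals $\pi(L(V))$. Functionality is not even needed for this step, only that acceptance coincides with being a segmentation of a domain string.

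Regular languages are closed under letter-to-letter homomorphic images. Concretely, relabel each edge $(\sigma,m)$ of $V$ by $\sigma$. This gives a nondeterministic automaton with the same states, start state and accepting states, whose accepted language is $\pi(L(V))$. A subset construction determinizes it if a deterministic recognizer is wanted. The domain is therefore regular, which contradicts the hypothesis of Lemma~\ref{lem:outside}.

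The one point needing care is the empty string. Under the convention of Section~\ref{sec:interface}, the start state is fresh and its acceptance is exactly the empty input's membership in the domain. The relabeled automaton keeps that start state, so $\varepsilon \in \pi(L(V))$ exactly when $\varepsilon$ is in the domain, and the projection identity holds there too. The rest is routine closure, so there is no real obstacle. The only thing to check is that the definition makes acceptance exactly equivalent to being a segmentation of a tokenizable string, in both directions.
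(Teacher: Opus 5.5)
Your proof is correct and follows essentially the same route as the paper's one-line proof: the domain is the bit-forgetting projection of the verifier's regular accepted language, and regular languages are closed under that projection. Your empty-string paragraph is harmless but not needed, since for any automaton the relabeled copy accepts $\varepsilon$ exactly when the original start state accepts, which is exactly when $\varepsilon \in L(V)$, so the fresh-start convention plays no role here.
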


\begin{proof}
The unmarked projection of a regular marked-string language is regular, and the projection of the verifier's accepted
language is exactly the policy's domain.
\end{proof}

By the lemma, any policy whose domain is not regular sits outside the hypothesis, a policy whose domain is the
balanced-bracket language among them. The lemma draws that line and no more. In particular it says nothing about how a
policy is presented. Membership is determined by the marking relation, so a stack-defined device is neither enough to
exclude a policy, if the relation it computes is regular after all, nor enough to include one. A regular domain,
however, does not suffice for finite verifiability, as the next proposition shows.

\begin{proposition}[A regular domain does not suffice]\label{prop:regular-domain}
There is a segmentation policy whose domain is the regular language $\{\code{a}, \code{b}\}^{*}$ and which is not
finitely verifiable.
\end{proposition}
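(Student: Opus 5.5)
We exhibit an explicit policy whose domain is all of $\{\code{a}, \code{b}\}^{*}$ and whose marking relation is not regular. The proof then reduces to showing one language over marked strings is not regular. The plan is to let the segmentation of a string encode a non-regular comparison, and to keep the domain total so that Lemma~\ref{lem:outside} cannot be what excludes it.

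\textbf{The policy.} Define $P$ on every string $x \in \{\code{a}, \code{b}\}^{*}$ as follows. If $x = \code{a}^{n}\code{b}^{n}$ with $n \ge 1$, $P$ cuts $x$ into two pieces, $\code{a}^{n}$ and $\code{b}^{n}$, so the marking carries a single set bit after the last $\code{a}$. For every other nonempty string, including the other strings $\code{a}^{n}\code{b}^{m}$ with $n \neq m$, $P$ returns the whole string as one token, with the all-zero marking. The empty string gets the empty segmentation. Every string receives exactly one segmentation, so $P$ is a segmentation policy, and its domain is $\{\code{a}, \code{b}\}^{*}$, which is regular.

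\textbf{Non-verifiability.} Suppose a finite automaton accepts exactly $P$'s markings, so its language $L$ is regular. Intersect $L$ with the regular language of marked strings whose byte projection lies in $\code{a}^{*}\code{b}^{*}$ and which contain exactly one set bit. Regular languages are closed under intersection, so the result is regular. By construction of $P$, it consists precisely of the markings $(\code{a},0)^{n-1}(\code{a},1)(\code{b},0)^{n}$ for $n \ge 1$.

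Now apply the letter homomorphism that forgets the bits. Regular languages are closed under homomorphic image, and the image here is $\{\code{a}^{n}\code{b}^{n} : n \ge 1\}$, which is not regular by the pumping lemma. This contradiction shows $L$ cannot be regular, so no finite verifier for $P$ exists.

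\textbf{Main obstacle.} The only delicate step is the intersection. It must isolate exactly the markings with the cut, and it must do so using only regular constraints. That works because membership in $\code{a}^{*}\code{b}^{*}$ and "exactly one set bit" are both regular conditions. It also relies on every string of the form $\code{a}^{n}\code{b}^{m}$ with $n \neq m$ receiving the all-zero marking, so that none of those strings survives the filter.

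We should also confirm the final-zero convention. The single set bit follows the last $\code{a}$, never the final byte, since $n \ge 1$ guarantees at least one $\code{b}$ after it. So every marking of $P$ is well formed and $P$ is a legitimate segmentation policy.
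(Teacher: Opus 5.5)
Your proof is correct and takes essentially the paper's approach: a total policy whose boundary bits single out the strings built on $\code{a}^{n}\code{b}^{n}$, then intersection with a regular condition on the marks and projection of the bits, which would make a non-regular language regular. The paper's witness differs only in using the three tokens $\{\code{a}, \code{b}, \code{ab}\}$ and a zero first bit as the signal, so it must remove the two one-byte strings before concluding; your cut after the last $\code{a}$ avoids that step at the price of unboundedly long pieces, which Definition~\ref{def:fv} allows.
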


\begin{proof}
Take the tokens $\{\code{a}, \code{b}, \code{ab}\}$ and the policy that reads a leading \code{ab} as a single token
exactly when the rest of the input has the form $\code{a}^{n}\code{b}^{n}$, and otherwise cuts every byte into a token
of its own. Every string is tokenizable, since \code{a} and \code{b} are tokens, so the domain is all of
$\{\code{a}, \code{b}\}^{*}$. The bit following the first byte is zero on the strings
$\code{ab}\,\code{a}^{n}\code{b}^{n}$ and, by the final-zero convention, on the two one-byte strings, and on nothing
else. If the policy's marked strings formed a regular language, intersecting it with the regular set of markings whose
first bit is zero and projecting the bits away would make the set of those strings regular. But a regular language stays
regular when two strings are removed from it, and what remains is $\code{ab}\,\code{a}^{n}\code{b}^{n}$, which is not
regular.
\end{proof}

A regular domain is thus necessary but not sufficient for finite verifiability. The domain is the marking relation's
projection, so a non-regular domain forces a non-regular relation and puts a policy outside by Lemma~\ref{lem:outside},
while a regular domain, as the example shows, does not bring a policy inside. A supplied verifier does, and it is the
hypothesis the theorems use.

\paragraph{The class is a normal form.} It is tempting to place this class below a frontier of rational transductions,
where \citet{filiot2016equivalence} record inclusion, equivalence, and sequential uniformization as undecidable in
general, and to say that restricting to finite verifiability restores decidability. The encoding itself rules that
reading out. A boundary marking pairs each byte with exactly one bit, so the relation carrying an input to its marking,
a relation over $A^{*} \times B^{*}$ for the byte alphabet $A$ and the bit alphabet $B$, is length-preserving. By
Eilenberg and Schützenberger's theorem \citep{eilenberg1974automata} a length-preserving rational relation over
$A^{*} \times B^{*}$ is a rational subset of $(A \times B)^{*}$, equivalently realizable by a letter-to-letter
transducer. A rational subset of $(A \times B)^{*}$ is a regular language of marked strings, which is a finite verifier.
The converse is immediate, since a finite verifier over marked strings is already a letter-to-letter transducer, hence
rational. Over this encoding the two classes coincide.

\begin{proposition}[Collapse]\label{prop:collapse} For policies presented by boundary markings, one bit per byte, a
segmentation policy's marking relation is rational if and only if the policy is finitely verifiable.
\end{proposition}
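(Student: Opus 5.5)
The plan is to prove the two directions separately, taking the marking relation of a policy $P$ over the byte alphabet $A$ and the bit alphabet $B=\{0,1\}$ to be $R_P=\{(x,m) : x \in \mathrm{dom}(P),\ m \text{ the boundary bits of } P(x)\}$, a subset of $A^{*}\times B^{*}$ with $|m|=|x|$ for every pair. The one fact that carries the argument is the identification of such a pair with a single word over the product alphabet: the convolution $x\otimes m=(x_1,m_1)\cdots(x_n,m_n)\in(A\times B)^{*}$. This map is a bijection between length-matched pairs and marked strings, and it carries $R_P$ onto the set $L_P$ of marked strings that are $P$'s own segmentations. By Definition~\ref{def:fv}, finite verifiability is exactly regularity of $L_P$, since a regular language has a deterministic automaton and conversely.

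For the direction from finite verifiability to rationality, I would take a verifier $V$ with $L(V)=L_P$. I would reread each edge labeled $(\sigma,m)$ as a transducer edge $\sigma\mid m$, exactly the construction $T_V$ used above for the frontier. A successful run of $T_V$ on input $x$ with output $m$ is then a run of $V$ on $x\otimes m$, so $T_V$ realizes $R_P$ and $R_P$ is rational.

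For the converse, suppose $R_P$ is rational. It is length-preserving because every pair has $|x|=|m|$. By the Eilenberg--Sch\"utzenberger theorem \citep{eilenberg1974automata}, a length-preserving rational relation is realized by a letter-to-letter transducer; equivalently, its image under convolution is a rational, hence regular, subset of $(A\times B)^{*}$. That image is $L_P$. Determinizing the letter-to-letter transducer read as an automaton over $A\times B$ gives a deterministic automaton accepting exactly $L_P$, which is a finite verifier.

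I then need to check that this verifier is the object Definition~\ref{def:fv} asks for. Functionality holds because $P$ assigns at most one segmentation to each string, so $L_P$ holds at most one marking per byte string. The final-zero convention holds because every element of $L_P$ encodes a segmentation. Neither property has to be imposed; both are inherited from $L_P$.

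I expect the main obstacle to be applying the theorem honestly. A rational relation that is length-preserving only on its pairs might have realizing transducers whose edges carry unequal word lengths, such as $\sigma\mid\varepsilon$ followed by $\varepsilon\mid m$. The content of \citep{eilenberg1974automata} is precisely that these can be normalized to single letters. If a direct argument were wanted instead of the citation, I would first bound the lag between input and output on trim successful paths. A cycle with unequal input and output lengths could be pumped to produce an unequal-length pair, so every cycle is balanced, and the lag is bounded by the number of states times the maximal edge label length. I would then build the letter-to-letter transducer whose states record the original state together with a bounded buffer of pending letters from whichever side is ahead.
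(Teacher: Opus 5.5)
Your proof is correct and matches the paper's. The forward direction reads the verifier's $(\sigma,m)$-edges as a letter-to-letter transducer. The converse uses length preservation and Eilenberg--Sch\"utzenberger to obtain a rational, hence regular, subset of $(A\times B)^{*}$, which is then determinized into a verifier; your optional bounded-lag buffering sketch plays the role of the paper's cross-check through Elgot and Mezei's Theorem~6.1 and Proposition~4.4.
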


\begin{proof}
A finite verifier is a deterministic automaton over the alphabet of (byte, bit) pairs, so its language is a rational
subset of $(A \times B)^{*}$ and the relation it defines is rational. For the converse, the marking relation is
length-preserving by construction, so by \citet{eilenberg1974automata} it is a rational subset of $(A \times B)^{*}$.
The same step can be checked in \citet{elgot1965relations}. By their Theorem~6.1 (p.~58), a rational relation being a
transduction in their terms, every bounded transduction, a length-preserving one among them, is finite-automaton
definable, and by their Proposition~4.4 (p.~55) a length-preserving finite-automaton definable relation is defined by an
elementary automaton, every edge of which carries one letter per tape (their Definition~4.1, p.~55). The rational subset
is a regular language of marked strings, and determinizing an automaton for it yields a verifier.
\end{proof}

The finite-verifier discipline is not a restriction chosen to make the questions answerable but a normal form for
rational segmentation policies, and the four decisions apply to every one of them. The undecidability the general
transducer results describe belongs to relations that are not functions, and a policy's marking relation is a function
by definition, which is why its equivalence was decidable by \citet{blattner1977single} before any encoding. Length
preservation adds the normal form, one verifier interface on which all four decisions and the calculus run.

\paragraph{Undecidability sits in the presentation.} The distinction between a supplied verifier and an arbitrary
program motivates the interface of Section~\ref{sec:interface}, and the undecidability reaches all four decisions. Every
decision here is uniform in its \emph{given} verifiers, whose four methods are the input and from which the products are
built. No decision is uniform in a policy presented as an arbitrary program promised to compute a policy of the class.
Effective finite presentations, regular expressions, automata, or letter-to-letter transducers among them, convert
uniformly into verifiers and are not the obstruction.

Fix a machine $M$ and an input $w$, and build threshold policies that simulate $M$ on the fixed $w$ for $|x|$ steps, so
that which branch governs $x$ depends on $|x|$ alone. For each fixed $M$ the result is one branch everywhere or a
regular length-threshold splice of two, a \emph{threshold splice}, hence finitely verifiable either way. Three splices
then carry the four decisions. Splice singleton splitting over $\{\code{a}, \code{b}\}$ into greedy scanning over
$\{\code{ab}, \code{a}, \code{b}\}$ after the halt. The splice flips certification of $(\code{ab}, 1)$, true on the
first branch and refuted on the second. It also moves the exact supremum the boundary gap decision returns from one to
two, both branches bounded, so what the splice reaches is the supremum the decision returns and not boundedness alone.
Deciding either from the program would decide halting. The simulated machine is normalized to halt only after at least
one step, so the first branch always governs some input.

Splicing refusal into singleton splitting, so that only inputs past the halting step are tokenized, makes the domain
empty exactly when $M$ never halts, so realizability decides halting. The third splice compares singleton splitting with
its splice into refusal after the halt, which is its restriction to inputs shorter than the halting step. The two are
fully equivalent exactly when $M$ never halts, so equivalence decides halting too.

The simulation runs on the fixed $w$ because, if $M$ were simulated on $x$, the set of inputs $x$ on which it halts
within $|x|$ steps need not be regular, and the constructed policy then need not be finitely verifiable. The shipped
instruments check the claimed differences between each pair of branches, and the reductions themselves are proofs.
Existential membership in the class is not a usable hypothesis; a supplied verifier is, and every theorem here is stated
of the verifier presentations it names. Figure~\ref{fig:landscape} draws the resulting landscape.

The first splice keeps both branches bounded, so it shows that no procedure computes the supremum from a program and
leaves open whether boundedness alone could be decided that way. Another branch after the halt closes that question, and
the certificate $(\code{ab}, 1)$ still flips.

\begin{proposition}[Boundedness from a program]\label{prop:gap-bounded} For a machine $M$ and an input $w$, let $P_{M,w}$ be the policy over $\{\code{a}, \code{b}\}$ that splits a string $x$ into single bytes when $M$ does
not halt on $w$ within $|x|$ steps, and otherwise takes a nonempty $x$ as one token, as regular maximal munch over the
token language $\{\code{a}, \code{b}\}^{+}$ does. Each $P_{M,w}$ is finitely verifiable. If $M$ never halts on $w$, then
$P_{M,w}$ certifies $(\code{ab}, 1)$ and its boundary gap is one; if $M$ halts on $w$, then $P_{M,w}$ refutes
$(\code{ab}, 1)$ and its boundary gap is unbounded. Hence no procedure that receives a program promised to compute a
finitely verifiable policy decides whether the policy's boundary gap is bounded.
\end{proposition}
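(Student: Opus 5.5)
The proof has three parts: membership of each $P_{M,w}$ in the class, the two cases of the dichotomy, and a reduction from halting. For membership we fix $M$ and $w$ and split on whether $M$ halts on $w$. If it never halts, $P_{M,w}$ is singleton splitting on all of $\{\code{a},\code{b}\}^{*}$. Its verifier is a one-state automaton, with a fresh start state, that accepts exactly the markings whose every bit is one except the last. If $M$ halts on $w$ after exactly $t$ steps, with $t \ge 1$ by the normalization of Section~\ref{sec:real}, then a string of length below $t$ is split into single bytes, and one of length at least $t$ is a single token. The verifier counts lengths up to $t$ in a finite counter. Below $t$ it demands bit one on every byte but the last. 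Once the length reaches $t$ it demands bit zero throughout. Since the branch is decided by $|x|$ alone, and the whole marking is checked against the length reached at the end, the verifier is a product of a length counter capped at $t$ with two small checkers, and it accepts in the checker matching the final length. Some care is needed because the branch governing a byte's bit is known only at the end of the input. We handle this by running both checkers in parallel and choosing between them at acceptance by the counter value. This yields a finite verifier that is deterministic, functional and well formed, so $P_{M,w}$ is finitely verifiable in both cases, and never effectively from the program, which is the point.

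For the never-halting case, every token start is a position of the string. An occurrence of \code{ab} at $p$ has $p+1$ as a token start, and the interval $(p+1,p+1]$ is empty, so $(\code{ab},1)$ is certified by Definition~\ref{def:cover}. Every boundary-free stretch is one byte, so the supremum of the boundary gap is one, the domain containing nonempty strings. For the halting case with halting step $t$, take $x = \code{ab}\,\code{a}^{\max(t-2,0)}$. It has length at least $t$, so it is one token, and position $1$ is not a token start, which refutes $(\code{ab},1)$. The strings $\code{a}^{n}$ with $n \ge t$ are single tokens of length $n$, so the boundary gap is unbounded.

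The reduction then follows. Any procedure deciding boundedness of the boundary gap from a program promised to compute a finitely verifiable policy would, applied to the uniformly computable program for $P_{M,w}$, answer bounded exactly when $M$ does not halt on $w$. That program simulates $M$ on $w$ for $|x|$ steps and then segments accordingly, and it meets the promise by the first part. The procedure would therefore decide the halting problem. The main obstacle is the membership step, specifically making explicit that the branch depends only on the final length, so that a deterministic verifier can defer the choice. Keeping both checkers' states in parallel settles this, and the remaining cases are direct.
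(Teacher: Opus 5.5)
Your proof is correct and follows the paper's argument. You show the accepted marking language is regular in each case; in the halting case the paper notes it is a finite set united with the all-zero markings of length at least $t$, which is exactly what your capped counter with two parallel checkers recognizes. You then use the same certification, refutation and unbounded-gap witnesses, and the same reduction from halting.

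One harmless slip: the never-halting verifier cannot be one state plus a fresh start. After a byte it must separate an accepting state, reached on bit zero and with no further moves, from a rejecting state reached on bit one, so it needs a start state and two further states.
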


\begin{proof}
Both branches tokenize every string over $\{\code{a}, \code{b}\}$, so the domain is all of them. If $M$ never halts on
$w$, every string is split into single bytes, the accepted markings are the empty string and the markings $1^{n-1}0$ for
$n \ge 1$, a regular language of marked strings, and every boundary-free stretch is one byte, so the boundary gap is
one. Every position is then a token start, so an occurrence of \code{ab} at $p$ has a token start at $p+1$ and none
after it inside the occurrence, and $(\code{ab}, 1)$ is certified. If $M$ halts on $w$ after $t \ge 1$ steps, the
strings shorter than $t$ are split into single bytes and every other string is one token. The accepted markings are the
empty string, the markings $1^{n-1}0$ of the strings of length $n$ with $1 \le n < t$, and the all-zero markings of the
strings of length at least $t$. That language of marked strings is regular because it is the union of a finite set and
the all-zero markings of length at least $t$. Each string has one marking in it, and every nonempty marking in it ends
in a zero bit, so a deterministic automaton for it, built from $t$ alone, is a verifier presenting $P_{M,w}$. Every
string of length $n \ge \max(t, 2)$ is one boundary-free stretch of $n$ bytes, so the boundary gap is unbounded, and
such a string beginning with \code{ab} has its position $1$ inside the token that begins at $0$, so $(\code{ab}, 1)$ is
refuted. A program for $P_{M,w}$, which simulates $M$ on $w$ for $|x|$ steps before choosing the branch, is computed
from $M$ and $w$, so a procedure deciding boundedness of the boundary gap from such programs would decide whether $M$
halts on $w$.
\end{proof}

\begin{figure}[ht]
\centering
\begin{tikzpicture}[every node/.style={font=\small}]
\draw[rounded corners=10pt, thick] (0,0) rectangle (14.0,5.6);
\node[anchor=north west] at (0.25,5.45) {segmentation policies};
\draw[rounded corners=9pt, thick] (0.5,0.4) rectangle (10.2,4.8);
\node[anchor=north west, text width=9.3cm] at (0.7,4.65)
    {finitely verifiable, equivalently rational by Proposition~\ref{prop:collapse}: the four decisions hold, closed
    under the calculus};
\draw[rounded corners=8pt, thick] (0.9,0.75) rectangle (6.4,3.05);
\node[anchor=north west, text width=5.2cm] at (1.1,2.9)
    {online scannable: bounded delay, decidable from the verifier};
\node[anchor=south west, text width=3.4cm] at (6.7,0.85)
    {$\{a^{+}b,\ a\}$: verifiable, not scannable};
\node[anchor=south west, text width=3.2cm] at (10.5,0.85)
    {balanced brackets: outside by Lemma~\ref{lem:outside}};
\node[anchor=south west] at (1.1,0.95) {literal maximal munch};
\end{tikzpicture}
\caption{The landscape the hypothesis draws. Verification is strictly weaker than scanning, and the frontier between
them is decidable. Every policy with a non-regular domain lies outside the class, including one whose domain is the
language of balanced brackets. The theorems hold throughout the class, its online-scannable policies included, and the
class is exactly the rational segmentation policies.}
\label{fig:landscape}
\end{figure}

\section{The calculus}\label{sec:calculus}


Real systems rarely run one policy. A pre-tokenizer splits at class changes, a subword stage re-segments each run, a
schema restricts the inputs a stage can see, and a byte fallback totalizes a partial vocabulary. The class is closed
under all of it.

\begin{definition}[Labeler]\label{def:labeler} A \emph{labeler} for a policy is a deterministic finite automaton over
bytes, run from its initial state on each segment's bytes, that names the segment when it closes by a map from its
states to a finite name alphabet. A literal vocabulary supplies a labeler directly, by spelling, and a regular one by
least accepting index. A finite token alphabet alone does not, so the labeler is data that a stage's presentation must
supply.
\end{definition}

\begin{theorem}[Closure]\label{thm:calculus} Finitely verifiable policies are closed under four operations. A
\emph{pipeline} over a stage one presented with a labeler feeds each closed stage-one token to a second policy as its
name. A \emph{restriction} limits the inputs to a regular schema. A \emph{splitter pre-tokenizer} forces boundaries at
character-class changes and runs the inner policy on each run in isolation. A \emph{fallback union} totalizes a literal
vocabulary with single-byte tokens. Each composite verifier is constructed explicitly, and every composite inherits the
four decisions of Theorem~\ref{thm:four} unchanged, together with the cut and chunk independence decisions of
Theorems~\ref{thm:cut} and~\ref{thm:chunk} for the pairs it certifies and the inventories of them.
\end{theorem}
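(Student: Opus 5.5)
The plan is to give, for each of the four operations, an explicit finite deterministic automaton over marked strings and to check that it is well formed and functional. The inheritance clause then needs no new work. Theorems~\ref{thm:four}, \ref{thm:cut} and~\ref{thm:chunk} take as input only a well-formed functional finite verifier presented through the four interface methods, so any composite verifier that is finite, effective, functional and well formed inherits all six decisions unchanged. The work is therefore four constructions, each followed by a check of the two interface properties. Functionality will in each case follow from the composite's accepted language being the graph of a partial function, namely the composite policy itself. Well-formedness will follow because every accepting transition is taken with bit zero.

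The two simple constructions come first.

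\emph{Restriction.} Take the product of the verifier with a DFA for the schema, run on the byte component alone. The state accepts when both components accept. The language is the graph intersected with schema-markings, which is still a function and still ends in zero.

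\emph{Splitter.} Fix the character-class map. The state records the inner verifier's state, the class of the last byte, and a pending flag recording whether the previous byte's bit was one. Reading $(\sigma,m)$ proceeds in two cases.
\begin{itemize}
\item If $\sigma$'s class differs from the previous class, the previous bit must have been one. The inner run on the closed run must have accepted; I will check this by having the inner copy test acceptance on the zero-bit variant of the previous byte, the same device used in the proof of Theorem~\ref{thm:cut}. The inner copy then restarts at its start state.
\item If the class is the same, the inner run simply continues.
\end{itemize}
To avoid the lookahead, I would keep the inner state one byte behind. The last byte of the run is read by the inner copy only when the next byte reveals whether a class change follows. This adds a one-byte buffer, which is finite. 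A refused inner step is a refusal of the composite.

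\emph{Fallback union.} For a literal vocabulary $L$, I would fix the intended semantics as greedy over $L\cup\Sigma$. That policy is maximal munch over a literal set, so its verifier is the literal maximal-munch construction of Section~\ref{sec:interface}.

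The pipeline is the main obstacle. Stage one has verifier $V_1$ and labeler $\lambda$ with name alphabet $N$. Stage two is a policy $P_2$ over $N$ with verifier $V_2$. The composite segmentation groups consecutive stage-one tokens into the stage-two tokens. The composite therefore sees bytes with a composite bit, while the stage-one bits stay hidden.

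The composite verifier is nondeterministic. It guesses the stage-one bit at each byte, runs $V_1$ on the byte with that guessed bit, and runs $\lambda$ on the current segment. When a segment closes, it feeds the resulting name to $V_2$ together with a stage-two bit. That stage-two bit is one exactly when the composite bit at this byte is one. At a byte where stage one does not close, the composite bit must be zero. At the end, $V_1$, $V_2$ and the labeler must all accept.

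The determinization is finite. The remaining point is functionality. Because $V_1$ is functional, the guessed stage-one marking is forced on any accepting run: it is $P_1$'s segmentation. Because $V_2$ is functional, the stage-two marks on the name word are forced as well. Hence each byte string has at most one accepted composite marking. Well-formedness follows because the final byte closes both stages with bit zero.

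The delicate point is that the subset construction must preserve this functionality. Since determinization does not change the accepted language, it does. I also expect a small case analysis around the empty input and around the start-state normalization of Section~\ref{sec:interface}. I would treat that by adjoining a fresh start state to every composite.
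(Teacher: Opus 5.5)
Your proposal is correct and follows the paper's proof sketch essentially step for step: a product with the schema automaton for restriction, a one-byte buffer whose byte reaches the inner verifier with bit zero and must be accepted at a class change for the splitter, and literal maximal munch over $L\cup\Sigma$ for the fallback union. For pipelines it is the same subset construction over triples of stage-one verifier state, labeler state and stage-two verifier state, with inheritance immediate because Theorems~\ref{thm:four} to~\ref{thm:chunk} take any well-formed functional finite verifier; state explicitly that the final stage-one segment's label is fed to $V_2$ with bit zero inside the acceptance test (the labeler names rather than accepts, and a deterministic verifier cannot know a byte is final when it reads it), and note that your fresh start state plays exactly the role of the paper's consumed-input bit for the empty input.
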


\begin{proof}[Proof sketch]
Of the four operations, pipelines need the main construction. Pipeline composition requires a supplied finite-state
labeler with a finite name alphabet, and stage two reads that alphabet. A literal vocabulary can use its token spellings
as names, so a spelling labeler's alphabet is stage one's finite token alphabet. A regular vocabulary may have
infinitely many spellings, such as the lexemes $a^n b$ of a regular stage, and no finite composite state can store an
arbitrary one, so its presentation must supply an effective finite-state labeler with finite range. The construction
takes the labeler as a parameter, so the closure holds for every labeler a stage is presented with.

Under the hypothesis a composite marked string carries only the composite boundaries, so the composite verifier must
judge stage one's segmentation without seeing its marks. It runs the subset construction on guessed marks, tracking the
finite set of triples of stage-one verifier state, labeler state, and stage-two verifier state consistent with the bytes
read and the composite marks seen, beside one bit recording whether any byte has been read at all. Whenever a tracked
guess closes a stage-one segment, the labeler's name for it is fed to stage two and the labeler resets.

At end of input a tracked triple that has read something feeds its final segment's label to stage two with the
non-closing bit before asking both stages' acceptance; one that has read nothing asks both stages about the empty input
and feeds no label. Acceptance requires a tracked triple in which both stages accept.

The consumed-input bit records whether a byte has been read, without inspecting the inner policy's start representation.
The tracked triples lie in a finite product of state sets, so their possible subsets form a finite set and the composite
verifier is finite. Each composite verifier accepts exactly the composite policy's markings, so it is functional and
well formed, and Theorems~\ref{thm:four} to~\ref{thm:chunk}, which take any such verifier, apply to it unchanged.

The other three constructions are short. Restrictions are the product with the schema's automaton, marks invisible to
the schema side. Splitters force a boundary at every character-class change and run the inner policy on each run in
isolation, and their composite verifier carries a one-byte lookahead buffer, enough to see the class change that forces
a boundary before committing the bit. At a forced boundary the composite bit is one, and the buffered byte reaches the
inner policy with bit zero, since its run ends there, so the inner policy must accept before it restarts on the next
run. Fallback unions are set union of a literal vocabulary with the byte alphabet, a plain policy again, total by
construction.
\end{proof}

Each construction is validated against a composite reference scan on every marking of every small string. The regular
pipeline, its labeler naming each closed segment by its least accepting language index, the same priority the reference
scan uses, is exact on 5{,}080 strings, the literal pipelines on 1{,}020 strings across four pipeline pairs, and the
splitters on 3{,}094 across three splitter configurations. Given the labeler that spells a segment back, the composite
is decided boundary-equivalent to the literal pipeline on every staged pair, and each of the two matches the same
two-stage reference scan over $508$ strings.

With a labeler that records only a segment's length class, stage two reads a coarser name alphabet instead of the inner
token set used by the literal pipeline construction. That difference is one of alphabets only. The composite built over
the length-class labeler is decided equal to maximal munch over
$\{\code{aa}, \code{ab}, \code{ba}, \code{bb}, \code{a}, \code{b}\}$, the single-stage policy it flattens to, and a
literal pipeline reaches that same policy by another route, so in this example the coarser labeler shrinks the stage-two
alphabet without producing a policy unavailable to literal pipelines.

Definition~\ref{def:labeler} says the labeler is part of stage one's presentation and not a consequence of it. The
following proposition makes that precise.

\Needspace{6\baselineskip}
\begin{proposition}[A boundary verifier does not determine the names]\label{prop:labeler}
A finite verifier for a policy's boundaries, together with the fact that its token names are finitely many, does not
determine which name a closed segment carries. Two ordered families can present one policy, so that a single verifier
accepts the markings of both, and still induce labelers that disagree on every segment. A second stage then tells the
two pipelines apart.
\end{proposition}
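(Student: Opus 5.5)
The proof is by explicit counterexample. I will build one policy with one finite verifier, present it by two ordered token families whose induced labelers disagree on every closed segment, and then choose a second stage that separates the two pipelines. The simplest candidate is singleton splitting over $\{\code{a},\code{b}\}$, presented in two ways. The first family names each one-byte segment by its spelling, the labeler sending $\code{a}\mapsto\code{a}$ and $\code{b}\mapsto\code{b}$. The second family lists the same two tokens under swapped names, or equivalently orders them so that the least accepting index picks the other name, giving the labeler $\code{a}\mapsto\code{b}$, $\code{b}\mapsto\code{a}$.

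Both families tokenize every string by cutting every byte. Hence both present the same policy, and one verifier serves both: it accepts the empty marking and the markings $1^{n-1}0$. This verifier is the whole boundary datum, and it contains no trace of names. The name alphabet is finite (two names) in both cases, so the hypotheses of Definition~\ref{def:labeler} hold for each. The two labelers disagree on every segment by construction.

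If the paper prefers the families to differ in their regular token languages rather than in a renaming, I would use instead two regular families. One is $L_1=\{\code{a}\}, L_2=\{\code{b}\}$. The other is $L_1=\{\code{b}\}, L_2=\{\code{a}\}$, whose least-accepting-index labeler gives every segment the opposite index. This matches the regular-vocabulary labeler named in Definition~\ref{def:labeler}.

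For the separation I will take stage two to be a policy over the name alphabet that distinguishes names. One choice is maximal munch over the tokens $\{\code{ab},\code{a},\code{b}\}$, read on names. On input $\code{ab}$, the first pipeline feeds the names $\code{a}\,\code{b}$; stage two merges them into one token, and the composite marking is $00$. The second pipeline feeds $\code{b}\,\code{a}$; stage two cuts, and the composite marking is $10$. By the construction of Theorem~\ref{thm:calculus}, each composite marking is read off stage two's boundaries pulled back to bytes. The two composites therefore give the string $\code{ab}$ different segmentations, and the boundary half of Theorem~\ref{thm:four} separates them with $\code{ab}$ as the differing string.

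The only delicate point is checking that the composite marking really is what I claimed. This means verifying the pullback rule in the pipeline construction: a composite boundary is exactly a stage-one boundary that is also a stage-two boundary on the name stream. Here stage one cuts every byte, so the check reduces to reading stage two's marks directly. I expect this to be routine. The main care needed is to state precisely that "the names are data" means the pipeline depends on the labeler and not only on the verifier.
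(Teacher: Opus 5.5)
Your construction is correct. Like the paper's, it is an explicit counterexample built on singleton splitting under two priority orders, but the source of the name disagreement and the separating stage both differ.

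The paper works over $\{\code{a}\}$ with the one language $\{\code{a}\}$ listed twice, under the names $X$ and $Y$. Its two families therefore hold exactly the same named languages and differ only in order. The disagreement comes from the tie-breaking rule alone, which is exactly what the boundary graph forgets.

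In your regular version the languages $\{\code{a}\}$ and $\{\code{b}\}$ never match the same segment. The disagreement instead comes from the least-accepting-index convention using list positions as names. Had each language carried a fixed name, reversing your list would leave the two labelers identical. Your swapped-names variant is a bare renaming that no priority order induces. Both remain valid instances under Definition~\ref{def:labeler}, but the paper's example is the sharper witness that a priority rule, and not only a naming convention, is absent from the verifier.

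For the second stage, the paper's stage accepts $X^{+}$ and refuses any $Y$, so the composites differ in domain, $\code{a}^{+}$ against the empty set. Your maximal munch over $\{\code{ab},\code{a},\code{b}\}$ on names keeps both composites total and yields a boundary differential at \code{ab}. That shows names can move cuts on a common domain, which the paper's example does not exhibit.

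Your pullback reading of the composite marking matches the grouping rule used in Proposition~\ref{prop:subset}, so the markings $00$ and $10$ you compute are right.
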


\begin{proof}
Take the ordered family, a finite list of token languages read in priority order, whose first token language is
$\{\code{a}\}$ under the name $X$ and whose second is $\{\code{a}\}$ under the name $Y$, over the alphabet
$\{\code{a}\}$. Nothing in the definition of an ordered family forbids repeating a language under two names. A literal
vocabulary names tokens by their spellings and cannot tell the two copies of $\{\code{a}\}$ apart, so the family is a
named regular one. Maximal munch with list priority closes a segment at every byte and names it $X$. Reversing the order
closes a segment at every byte and names it $Y$. The two presentations have the same domain and the same boundary
function, so they are one policy in the sense of Definition~\ref{def:fv} and one verifier accepts the markings of both,
while the names differ at every segment. A stage two that accepts $X^{+}$ and refuses any $Y$ then has domain
$\code{a}^{+}$ under the first labeler and the empty domain under the second, so the composite differs. The boundary
graph records where segments close but not which name each receives. It therefore omits the priority rule used to choose
each segment's name when token languages match equally.
\end{proof}

The calculus has two immediate uses. First, the literal pipeline of the preceding example is proved to agree with the
single-stage policy it flattens to, by the generic boundary decider of Theorem~\ref{thm:four} on the common domain,
which for these total policies is equivalence. Removing the token \code{bb} from that single-stage policy makes it
differ from the pipeline, and the decider returns \code{bb} as a differing string. Second, certification becomes
\emph{schema-relative}. The window $(\code{b}, 0)$ is refuted globally over $\{\code{ab}, \code{b}\}$ and certified on
the \code{b}-only schema, by the same decider on both sides of the restriction, so a certificate's meaning under a data
contract is computed by the same decider.

\section{Stage-local certificates}\label{sec:transfer}


Certificates computed for one stage survive composition under four laws. Each is proved below and checked by execution
on finite cases, against the deciders or, for the boundary-subset law, against an independent reference scan, and each
check is named where its counts are quoted. Covering an occurrence correctly, and certifying a window, are the terms of
Definition~\ref{def:cover}, under which the position $p+o$ the origin names is a token start and no token start follows
it inside the occurrence.

\begin{proposition}[Monotonicity]\label{prop:monotone}
If $P$ certifies $(W,o)$, so does its restriction to any regular schema $K$.
\end{proposition}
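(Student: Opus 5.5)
The plan is to unfold what a restriction is, as fixed in the proof sketch of Theorem~\ref{thm:calculus}, and then observe that certification is a universal statement over the domain. The restriction $P|_K$ is the product of $P$'s verifier with an automaton for the schema $K$, the schema side ignoring the marks. Its accepted markings are therefore exactly those markings accepted by $P$ whose byte projection lies in $K$. From this I will record two facts. First, the domain of $P|_K$ is $\mathrm{dom}(P)\cap K$. Second, for every $x$ in that domain, $P|_K(x)=P(x)$. The second fact uses functionality: $P$ accepts exactly one marking of $x$, and the product accepts that same marking and no other.

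With these facts the argument is direct. Take any string $x$ in the domain of $P|_K$ and any occurrence of $W$ in $x$ at a position $p$. Since $x$ lies in $\mathrm{dom}(P)$ and $P$ certifies $(W,o)$, Definition~\ref{def:cover} says that $P$ covers this occurrence correctly. Concretely, $p+o$ is a token start of $P(x)$, and no token start lies in $(p+o,\,p+h-1]$. Being a token start is read off the marking alone: position $i$ is a token start when $i=0$ or the bit at $i-1$ is set. The marking $P|_K(x)$ equals $P(x)$, so the same positions are token starts under $P|_K$. Hence $P|_K$ covers the occurrence correctly. The occurrence was arbitrary, so $P|_K$ certifies $(W,o)$.

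I do not expect a real obstacle here. The one point to make explicit is that restriction never changes a segmentation, only which strings are in the domain. That is where functionality and the mark-blindness of the schema automaton are used. The empty input needs no care: it contains no occurrence of the nonempty window $W$, so whether $K$ keeps or drops it does not matter. Equally, if $\mathrm{dom}(P)\cap K$ is empty, the certificate holds vacuously, consistent with how Definition~\ref{def:cover} quantifies. An executable check would run the certification decision of Theorem~\ref{thm:four} on both sides of restriction verifiers and confirm that no certified pair becomes refuted. The converse can fail, as the $(\code{b},0)$ example under the \code{b}-only schema shows.
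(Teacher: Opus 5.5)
Your proposal is correct and is the paper's own argument: the restriction has domain $\mathrm{dom}(P)\cap K$ and keeps $P$'s segmentation of each string, so every occurrence it must cover is one $P$ already covers correctly. Your extra justification of the unchanged segmentation, through the mark-blind product and functionality, just makes explicit a step the paper takes for granted.
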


\begin{proof}
The restriction's domain is $\mathrm{dom}(P) \cap K$ and its segmentation of each string is $P$'s, so every occurrence
it must cover is one $P$ already covers correctly.
\end{proof}

Restriction only adds certificates. The global certified set is a subset of the restricted one on all 12 staged cases, 7
strictly, so a global certificate is reusable under every schema.

\begin{proposition}[Splitter forcing]\label{prop:splitter}
Let $S$ be the splitter over a class function $\kappa$ and any inner policy, and let $W$ have length $h \ge 2$ with
$\kappa(W_{h-2}) \neq \kappa(W_{h-1})$. Then $S$ certifies $(W,\,h-1)$.
\end{proposition}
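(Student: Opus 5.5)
The plan is to unfold Definition~\ref{def:cover} for the origin $o = h-1$. With $o = h-1$ the interval $(p+o,\,p+h-1]$ is empty, so covering an occurrence of $W$ at $p$ correctly reduces to one fact: position $p+h-1$ is a token start of $S$'s segmentation of $x$. The no-later-start clause is vacuous. Since $p+h-1 \ge 1$, because $h \ge 2$, being a token start means exactly that the boundary bit at position $p+h-2$ is set. So the whole proof amounts to showing that $S$ always places a boundary between $x_{p+h-2}$ and $x_{p+h-1}$.

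First, fix any $x$ in the domain of $S$ and any occurrence of $W$ at $p$. The bytes there are $x_{p+h-2} = W_{h-2}$ and $x_{p+h-1} = W_{h-1}$, and the hypothesis says their classes differ under $\kappa$. Second, I would invoke the definition of the splitter from Theorem~\ref{thm:calculus}. It forces a boundary at every character-class change and runs the inner policy on each maximal run in isolation. So $S(x)$ is the concatenation of the inner segmentations of the runs, with a boundary after the last byte of every run except the final one. Because the class changes between positions $p+h-2$ and $p+h-1$, the byte $x_{p+h-2}$ ends a run and is not the last byte of $x$. Its bit in $S(x)$ is therefore one, whatever the inner policy does, and the inner policy's own bit for it is zero only by the final-zero convention within its run. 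Hence $p+h-1$ is a token start, the occurrence is covered correctly, and since $x$ and $p$ were arbitrary, $S$ certifies $(W,h-1)$.

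The only step needing care is reading the splitter's semantics precisely, since the theorem's proof sketch describes it at the verifier level. I would state it at the policy level: $x \in \mathrm{dom}(S)$ exactly when every run lies in the inner domain, and the marking is the concatenation described above. With that reading, forcing is immediate. It holds vacuously when the domain contains no occurrence, and no property of the inner policy is used. I would also note that $h \ge 2$ is what makes the bit position $p+h-2$ exist inside the occurrence.
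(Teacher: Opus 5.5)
Your proof is correct and follows the paper's own argument. The splitter makes every class-change position a token start, so $p+h-1$ starts a token, and with origin $h-1$ the no-later-start clause of Definition~\ref{def:cover} is vacuous. Your unfolding through the boundary bit at position $p+h-2$ spells out the step the paper states in one line.
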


\begin{proof}
The splitter cuts its input into maximal same-class runs and segments each run in isolation, so every token lies inside
one run and every position at which the class changes is a token start. At an occurrence of $W$ at $p$ the bytes at
$p+h-2$ and $p+h-1$ differ in class, so $p+h-1$ is a token start, and since it is the occurrence's final position no
token start lies beyond it inside the occurrence.
\end{proof}

A window whose final byte follows a class change is thus certified at that final position for every inner policy, as
checked on all 504 staged final-position cases. That is the splitting companion's anchor supply turned into a closure
operation. The analogous statement for an arbitrary origin is false, and the decider refutes it.

\begin{proposition}[Boundary subset]\label{prop:subset}
For a pipeline $P = P_2 \circ_\lambda P_1$ over a labeler $\lambda$, $\mathrm{dom}(P) \subseteq \mathrm{dom}(P_1)$, and
for every $x \in \mathrm{dom}(P)$ the token starts of $P(x)$ are a subset of those of $P_1(x)$.
\end{proposition}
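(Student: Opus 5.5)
The plan is to unfold the definition of the pipeline $P = P_2 \circ_\lambda P_1$ as it appears in the proof sketch of Theorem~\ref{thm:calculus}, and to read both claims off it directly. That definition has three parts. A string $x$ lies in $\mathrm{dom}(P)$ exactly when $x \in \mathrm{dom}(P_1)$ and the name string $\lambda(t_1)\cdots\lambda(t_k)$ of the stage-one tokens $t_1 \cdots t_k = P_1(x)$ lies in $\mathrm{dom}(P_2)$. The composite segmentation is obtained by grouping consecutive stage-one tokens according to $P_2$'s segmentation of that name string. Each composite token is therefore a concatenation $t_i t_{i+1} \cdots t_j$ of a nonempty block of consecutive stage-one tokens. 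I would state this explicitly first, since the proof sketch describes it operationally. I would justify it from the composite verifier: a tracked triple can feed a label to stage two only when a guessed stage-one segment closes, and stage two's marks are carried on names rather than bytes. It follows that a composite boundary can sit only after a byte that closes a stage-one token.

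\emph{Domain inclusion.} This is immediate from the definition. Acceptance by the composite verifier requires a tracked triple in which the stage-one component accepts. By functionality, that component's guessed marks are then $P_1$'s own segmentation of $x$, so $x \in \mathrm{dom}(P_1)$. The empty input needs a separate remark, and it is handled by the consumed-input bit. When nothing has been read, the composite asks both stages about the empty input. Acceptance then requires stage one's start state to accept, and that start state's acceptance is exactly the empty input's membership in $\mathrm{dom}(P_1)$ under the start-state normalization of Section~\ref{sec:interface}.

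\emph{Token starts.} Let $x \in \mathrm{dom}(P)$, and let $b_0 = 0 < b_1 < \dots < b_k = |x|$ be the stage-one token boundaries, so that $t_i = x[b_{i-1}..b_i)$. The token starts of $P_1(x)$ are $b_0, \dots, b_{k-1}$. Since each composite token is a block $t_i\cdots t_j$, it begins at some $b_{i-1}$. Position $0$ is a start on both sides, and every later composite start is one of the $b_{i-1}$, which gives the inclusion. In marked-string terms, a composite bit $1$ after position $j-1$ forces the stage-one bit there to be $1$.

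The step I expect to need care is the grouping claim itself. The pipeline is defined through the subset construction on guessed marks, so I must argue that the accepting tracked triple follows $P_1$'s actual segmentation and not some other guess. Functionality of $P_1$'s verifier settles this: at most one stage-one marking of $x$ is accepted. Any triple that passes the final acceptance test, where stage one accepts after being fed the non-closing bit, has therefore followed exactly $P_1(x)$. Its stage-two component has read $\lambda$ of those tokens, and the composite marks it was consistent with are the grouped boundaries. Well-formedness rules out a composite boundary at $|x|$, and the final-segment handling in the sketch ensures the last block is closed without opening an empty one.
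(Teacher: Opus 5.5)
Your proposal is correct and follows the paper's own argument. The paper likewise reads off from the construction of Theorem~\ref{thm:calculus} that $P$ is defined exactly where $P_1$ is and $P_2$ accepts $\lambda^{*}(P_1(x))$, and that composite tokens are blocks of consecutive stage-one segments; your appeal to functionality of $P_1$'s verifier and your empty-input remark only make explicit what the paper's one-line proof leaves implicit.
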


\begin{proof}
By the construction of Theorem~\ref{thm:calculus}, $P$ is defined where $P_1$ is and $P_2$ accepts the label stream
$\lambda^{*}(P_1(x))$, and the pipeline segments $x$ by grouping consecutive stage-one segments as $P_2$ groups their
labels, so every composite token starts where some stage-one segment starts.
\end{proof}

A pipeline only removes boundaries, as checked on all 363 composite-tokenizable strings of the staged pairs, so at every
occurrence of a window the composite certifies, the position its origin names is a token start of the inner stage. The
fourth law is a decision rather than an inequality. Write $\Gamma$ for the labeler's finite name alphabet. A
\emph{flagged stream} is a word over $\Gamma \times \{0,1\}$ with exactly one flag set. When a flagged stream is the
label stream of $P_1(x)$ flagged on segment $k$, it selects $x$'s $k$-th stage-one segment. We write $(\gamma, k)$ for
the word $\gamma$ over $\Gamma$ flagged at index $k$, and say that it lies in $\mathrm{dom}(P_2)$ when $\gamma$ does.

\begin{theorem}[The lift]\label{thm:lift}
Let $P = P_2 \circ_\lambda P_1$ and fix a window $(W,o)$. Let $R \subseteq (\Gamma \times \{0,1\})^{*}$ be the set of
flagged streams obtained by taking $x \in \mathrm{dom}(P_1)$ and an occurrence at $p$ of $W$ in $x$ that $P_1$ covers
correctly, and flagging $\lambda^{*}(P_1(x))$ on the segment that begins at $p+o$. Let
$L \subseteq (\Gamma \times \{0,1\})^{*}$ be the set of flagged streams in $\mathrm{dom}(P_2)$ whose flagged index is
not a token start of $P_2$'s segmentation. Both are regular, $R$ effectively from $P_1$'s verifier, the labeler and the
window, and $L$ from $P_2$'s verifier alone. A stream in $R \cap L$ is the flagged stream of some string on which $P$
refutes $(W,o)$. If $P_1$ covers every occurrence of $W$ in every string of $\mathrm{dom}(P)$ correctly, in particular
if $P_1$ certifies $(W,o)$, then $P$ certifies $(W,o)$ if and only if $R \cap L = \emptyset$.
\end{theorem}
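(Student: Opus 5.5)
The proof has four steps. First, regularity of $L$. Second, regularity of $R$. Third, the soundness claim: a stream in $R\cap L$ comes from a refuting string. Fourth, under the covering hypothesis, the converse.

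\emph{Regularity of $L$.} Take $P_2$'s verifier $V_2$ and build a deterministic automaton over $\Gamma\times\{0,1\}$. It guesses the marks of the label word nondeterministically, steps $V_2$ on each guessed mark, and remembers the mark on the letter just before the flag, or whether the flag sits at index $0$. It accepts when $V_2$ accepts, exactly one flag was read, and the flagged index is not a token start, that is, the index is positive and the preceding guessed mark is zero. Functionality of $V_2$ makes the guessed marking $P_2$'s own segmentation, so this nondeterministic automaton recognizes $L$. Determinizing it gives regularity, effectively from $V_2$ alone.

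\emph{Regularity of $R$.} Build a nondeterministic transducer that guesses $x$ and its $P_1$-marking byte by byte. It runs $V_1$, the labeler $\lambda$, and the nondeterministic occurrence matcher from the certification product of Theorem~\ref{thm:four}, which designates an occurrence at $p$ and checks that $P_1$ covers it correctly. The machine emits a letter of $\Gamma\times\{0,1\}$ each time a segment closes, including the last segment at end of input. The flag bit is one exactly on the segment containing $p+o$; correct covering forces $p+o$ to be that segment's start. The matcher already knows $p+o$ when it is reached, so the flag is set on the segment opened there. The output language of a finite transducer is regular, since it is the projection of a regular language of runs in which each segment-closing transition contributes one output letter and every other transition contributes the empty word. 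Making the procedure effective is routine: erase the bytes and remove $\varepsilon$-transitions.

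\emph{Soundness of $R\cap L$.} Let a stream in $R\cap L$ come from $x$ and $p$. Since $(\gamma,k)\in L$, we have $\gamma\in\mathrm{dom}(P_2)$, so $x\in\mathrm{dom}(P)$. By the grouping description in Proposition~\ref{prop:subset}, the token starts of $P(x)$ are exactly the stage-one segment starts whose label index is a $P_2$ token start. Segment $k$ begins at $p+o$, and $k$ is not a $P_2$ start, so $p+o$ is not a token start of $P(x)$. Hence $P$ covers this occurrence wrongly and refutes $(W,o)$.

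\emph{Converse under the covering hypothesis.} Suppose $P$ covers some occurrence at $p$ in some $x\in\mathrm{dom}(P)$ wrongly, and $P_1$ covers it correctly. Then one of two things happens.
\begin{itemize}
\item $p+o$ is not a $P$-start. The flagged stream lies in $R$ by construction, and it lies in $L$ by the grouping fact, so $R\cap L\neq\emptyset$.
\item Some $P$-start lies in $(p+o,p+h-1]$. By Proposition~\ref{prop:subset} it is also a $P_1$-start, which contradicts correct $P_1$-covering, so this case cannot occur.
\end{itemize}
Conversely, if $P$ certifies $(W,o)$, the soundness step shows $R\cap L$ is empty.

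The step I expect to be the main obstacle is the bookkeeping that makes the flag land on the segment starting at $p+o$ inside a one-pass transducer. When $o>0$ the occurrence is only confirmed after $p+o$, so the machine must guess the occurrence early and verify it later, rejecting runs where the guess fails. It must also treat the first segment, index $0$, and the end-of-input label the same way as the composite construction of Theorem~\ref{thm:calculus} does.
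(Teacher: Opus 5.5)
Your proposal is correct and follows the paper's proof in all four parts: guessed-mark products for $L$ and $R$ with the erasing morphism, and the grouping behind Proposition~\ref{prop:subset} used in both directions, with your converse being the contrapositive of the paper's direct argument. The one bookkeeping difference is that the paper designates the occurrence at its final byte, where correct covering guarantees that the open segment is the one beginning at $p+o$ and has not yet closed, so it is flagged when it closes and the early guess you expect to be the main obstacle is not needed.
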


\begin{proof} \emph{$R$ is regular.} Run $P_1$'s verifier on bytes with guessed marks, the labeler beside it, reset at
every closed segment, and beside both the matcher of the certification proof, which tracks the window's match progress
and the token-start indicators of the last $h = |W|$ positions. Designate one occurrence nondeterministically at its
final byte, provided the indicators show a start at the origin's position and none after. The segment open at that byte
is then the one beginning at $p+o$, and it is flagged when it closes, at that byte's own mark, a later one, or end of
input. Emit each closed segment's label with its flag. The device is finite, so its accepted runs form a regular
language, and $R$ is the image of that language under the erasing morphism that keeps the emissions, regular again.

\emph{$L$ is regular.} Run $P_2$'s verifier on the stream with guessed marks, carrying whether the next index is a token
start, true at index zero and thereafter the previous mark. On the flagged symbol, latch whether that bit was false.
Accept when $P_2$ accepts with the latch set, and project the marks away.

\emph{Refutation.} Let $(\gamma, k) \in R \cap L$, and take $x \in \mathrm{dom}(P_1)$ and an occurrence at $p$ whose cut
$p+o$ begins segment $k$ with $(\gamma, k)$ as the stream flagged there, which exist since $(\gamma, k) \in R$. The
labeler need not be injective, so $x$ is one such string and not necessarily the only one. Since
$\gamma \in \mathrm{dom}(P_2)$, $x \in \mathrm{dom}(P)$; since $k$ is no token start of $P_2(\gamma)$, $p+o$ is no token
start of $P(x)$, and $P$ covers that occurrence wrongly.

\emph{Certification.} Suppose $R \cap L = \emptyset$, and take $x \in \mathrm{dom}(P)$ with an occurrence at $p$. By
hypothesis $P_1$ covers it correctly, so $p+o$ begins some segment $k$ and the stream flagged there lies in $R$. That
stream lies in $\mathrm{dom}(P_2)$ and not in $L$, so $k$ is a token start of $P_2$'s segmentation and $p+o$ one of
$P(x)$. No token start of $P(x)$ lies in $(p+o,\,p+h-1]$, every one being a start of $P_1(x)$ by
Proposition~\ref{prop:subset} and none of those lying there. Hence $P$ covers the occurrence correctly.
\end{proof}

Emptiness of $R \cap L$ is reachability in the product of the two automata. The product runs over the $R$-device's own
byte moves, the $L$ side stepping only at the labels those moves emit, so a reaching path projects to a refuting byte
string and not only to its label stream, and that projection is the counterexample. $R$ depends only on the labeled
inner stage and $(W,o)$, and $L$ only on the outer policy, so the two automata are stage-local. Thus $R$ can be built
once per window and origin and reused across outer policies.

The instrument builds both languages for regular inner stages, under the labeler that names a segment by its least
accepting language index, the priority the reference scan uses, and for literal outer stages. On all 190 (inner stage,
outer policy, certified window) cases the product's verdict agrees with the composite decider, generic certification on
the composite verifier of Theorem~\ref{thm:calculus}, and each of its 70 refutations reads back to a byte string that
the two-stage reference scan tokenizes and covers wrongly.

The product also finds refutations that a bounded enumeration misses. Over inner \code{a} and \code{b}, which the
least-index labeler names \code{0} and \code{1}, and outer tokens \code{0}, \code{1} and \code{0000000\,1}, the window
$(\code{ab}, 1)$ is refuted by the eight-byte counterexample \code{aaaaaaab}, the composite decider's own, which no
enumeration of streams to length seven contains.

\Needspace{6\baselineskip}
\begin{corollary}[The per-label condition]\label{cor:labels}
In the setting of Theorem~\ref{thm:lift}, with $P_1$ covering every occurrence of $W$ in $\mathrm{dom}(P)$ correctly and
$P_2$ literal over $\Gamma$, call a label \emph{realized left of the boundary} when some stream of $R$ carries it
immediately before its flagged segment. If no such label occurs as a non-final symbol of a token of $P_2$, then $P$
certifies $(W,o)$. When $P_1$ is literal, every label realized left of the boundary is the image under $\lambda$ of a
\emph{compatible left token}, an inner token $t$ whose suffix of length $\min(|t|, o)$ equals the suffix of $W[0..o)$ of
that length, so testing those images instead is a stronger sufficient test.
\end{corollary}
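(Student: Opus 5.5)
The plan is to reduce both claims of Corollary~\ref{cor:labels} to the emptiness criterion of Theorem~\ref{thm:lift}. The hypothesis on $P_1$ is exactly the one under which that theorem gives certification of $(W,o)$ by $P$ if and only if $R \cap L = \emptyset$. It therefore suffices to show that no flagged stream of $R$ lies in $L$ when the per-label condition holds. Take $(\gamma,k) \in R$ with $\gamma \in \mathrm{dom}(P_2)$, and suppose toward contradiction that $k$ is not a token start of $P_2(\gamma)$. Then $k \ge 1$, since index zero is always a token start. The symbols $\gamma_{k-1}$ and $\gamma_k$ then lie in one token of $P_2(\gamma)$, with $\gamma_{k-1}$ not its last symbol. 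Because $P_2$ is literal over $\Gamma$, that token is a word of its vocabulary in which $\gamma_{k-1}$ occurs as a non-final symbol. By definition $\gamma_{k-1}$ is realized left of the boundary, which is a contradiction. Hence $R \cap L = \emptyset$ and $P$ certifies $(W,o)$.

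For the literal-inner refinement I will take any $(\gamma,k) \in R$ with $k \ge 1$. By the definition of $R$ it comes from some $x \in \mathrm{dom}(P_1)$ and an occurrence of $W$ at $p$ that $P_1$ covers correctly. Segment $k$ of $P_1(x)$ begins at $c = p+o$, and segment $k-1$ is some inner token $t$ ending at byte $c-1$, with $\lambda(t) = \gamma_{k-1}$. The last $\min(|t|,o)$ bytes of $t$ occupy positions $c-\min(|t|,o)$ through $c-1$. These lie inside the occurrence, in its part $W[0..o)$, so they equal the suffix of $W[0..o)$ of that length. Thus $t$ is a compatible left token, and every realized label lies in the image of the compatible left tokens. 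A test that checks those images against the non-final symbols of $P_2$'s tokens checks a superset of the realized labels. Its passing implies the first condition, which makes it a stronger sufficient test.

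The step needing most care is the first one, where a non-start index must yield a token containing $\gamma_{k-1}$ non-finally. This needs $k \ge 1$, and it needs the reading of ``literal over $\Gamma$'' under which tokens of $P_2$ are words over $\Gamma$ and segments of $P_2(\gamma)$ are such words. I would make both explicit. I would also check that the flagged segment's predecessor exists in every stream of $R$ with $k \ge 1$; the case $k=0$ is harmless because it is never in $L$. In the second step, the only subtle point is that the segment before $c$ ends exactly at $c-1$. This holds because $c$ is a token start of $P_1(x)$ by correct covering.
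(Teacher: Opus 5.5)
Your proposal is correct and follows the paper's proof essentially step for step. Like the paper, you reduce to emptiness of $R \cap L$ via Theorem~\ref{thm:lift} and get $k \ge 1$ with $\gamma_{k-1}$ non-final in the $P_2$ token covering index $k$. For literal $P_1$, you then read the last $\min(|t|,o)$ bytes of the inner token ending at $p+o$ directly off the window's part $W[0..o)$.
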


\begin{proof}
Suppose $(\gamma, k) \in R \cap L$. Index zero is always a token start, so $k \ge 1$ and $\gamma_{k-1}$ is realized left
of the boundary. Since $k$ is no token start of $P_2(\gamma)$, the token of $P_2$ covering index $k-1$ covers $k$ as
well, so $\gamma_{k-1}$ is a non-final symbol of it, against the hypothesis. Hence $R \cap L = \emptyset$ and
Theorem~\ref{thm:lift} applies. For literal $P_1$ the segment ending at $p+o$ is a token $t$ occupying
$[p+o-|t|,\,p+o)$, which meets the window's first $o$ bytes on $[\max(p,\,p+o-|t|),\,p+o)$, an interval of length
$\min(|t|, o)$, and agrees with them there, so $t$ is compatible and $\gamma_{k-1} = \lambda(t)$.
\end{proof}

Both literal-policy assumptions can be weakened to a known finite vocabulary. The first part needs only that every
segment of $P_2$ is a word of a known finite set $V$ over $\Gamma$, the hypothesis then reading that no label realized
left of the boundary occurs as a non-final symbol of a word of $V$. The second part needs only that every segment of
$P_1$ is a word of a known finite set $U$, every realized label then being $\lambda(t)$ for a compatible $t \in U$. The
proof above uses nothing more, and in that form the corollary covers towers, whose stages have finite vocabularies.

The following checked example shows why the condition concerns labels rather than spellings. Take inner
$\{\code{a}, \code{b}\}$ with every segment given one label, call it $Z$, and outer tokens $ZZ$ and $Z$. The inner
window $(\code{ab}, 1)$ is certified, and no compatible left token's spelling is carried non-finally, so a test that
read spellings instead of labels would grant the lift. The exact decision refutes it with \code{ab} as the
counterexample, since the outer stream is $ZZ$ and maximal munch takes it whole.

Compatible left tokens can also begin before the window. Over inner $\{\code{aa}, \code{a}, \code{b}\}$ the window
$(\code{ab}, 1)$ occurs at position one of $\code{aab}$, and the token ending at the position its origin names is
$\code{aa}$, although the window shows only that token's final $\code{a}$.

Nor is the condition necessary. Over inner $\{\code{a}, \code{b}\}$ with outer tokens $(\code{a},\code{a})$,
$(\code{a})$, $(\code{b})$, the window $(\code{ab}, 1)$ lifts although \code{a} is carried non-finally, because the
window fixes the following symbol and the merge the condition guards against is unreachable in its context. In general
the condition falls short because it reads what can stand left of the boundary and not what the window and the inner
stage allow to follow it, so a token carrying the left label non-finally can be unrealizable at the boundary. Evaluated
over the labels the product realizes, the condition grants no lift the composite refuses, and it refuses the lift on 41
of the 190 staged cases although the composite certifies them.

The lift also decides whether a certificate survives a dictionary update. Take the byte-pair table with the one merge
$\code{b}+\code{c}$ over the alphabet $\{\code{a}, \code{b}, \code{c}\}$ and extend it by the merge $\code{a}+\code{b}$
at the later rank, so that the extended table tokenizes \code{abc} as \code{a} then \code{bc}, the earlier rank winning.
The window $(\code{b}, 0)$ is certified under the one-merge table and refuted under the extended one by the
counterexample \code{ab}, one token, whose \code{b} is covered by a token beginning before it, while the window
$(\code{abc}, 1)$ is certified under both. The update is a pipeline in which the one-merge tower, whose single stage is
the literal set $\{\code{a}, \code{b}, \code{c}, \code{bc}\}$, sits below a literal stage over those token names that
carries the new merge as the token $(\code{a}, \code{b})$, each segment named by its spelling. Its reference
segmentation equals the extended table's on all 3{,}280 strings through length 7.

On the surviving certificate the per-label condition refuses the lift, since the label \code{a} is realized left of the
boundary and carried non-finally by $(\code{a}, \code{b})$, while the exact lift proves it, the product of the two
stage-local automata being empty and the composite decider agreeing. At an occurrence of \code{abc} the inner stage
supplies \code{a} followed by \code{bc}, so the outer token $(\code{a}, \code{b})$ cannot cross the marked boundary. The
product built for $(\code{b}, 0)$ refutes that window by the counterexample \code{ab}.

The two greedy disciplines separate on the same window. Literal maximal munch over the extended table's final vocabulary
$\{\code{a}, \code{b}, \code{c}, \code{ab}, \code{bc}\}$ tokenizes \code{abc} as \code{ab} then \code{c}, so the generic
decision certifies $(\code{abc}, 1)$ on the tower and refutes it on the literal policy with \code{abc} as the
counterexample. A certificate is a statement about one policy, and here the byte-pair table and literal maximal munch
over its final vocabulary are two different policies.

\begin{proposition}[Stack transitivity]\label{prop:stack}
Let $C = T \circ G$ be a tower of pipelines, $T$ reading $G$'s segments through a labeler as in Theorem~\ref{thm:lift},
whose innermost stage certifies $(W,o)$. Then $C$ certifies $(W,o)$ if and only if $G$ covers every occurrence of $W$ in
every string of $\mathrm{dom}(C)$ correctly and $R_G \cap L_T = \emptyset$, with $R_G$ and $L_T$ the languages of
Theorem~\ref{thm:lift} for $G$ and $T$.
\end{proposition}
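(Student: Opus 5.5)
The plan is to read the tower $C = T \circ G$ as a single pipeline in the sense of Theorem~\ref{thm:lift}, with $G$ as the inner stage and $T$ as the outer one. The proposition then splits into its two directions. The "if" direction is exactly the certification half of Theorem~\ref{thm:lift}. The "only if" direction needs two separate facts: the covering hypothesis on $G$, and the emptiness of $R_G \cap L_T$. The hypothesis that the innermost stage certifies $(W,o)$ matters only for the effectivity remark that $R_G$ is built from $G$'s verifier and labeler. Since the equivalence is stated on $\mathrm{dom}(C)$, it is not otherwise used in the logic.

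\emph{If.} Assume $G$ covers every occurrence of $W$ in $\mathrm{dom}(C)$ correctly and $R_G \cap L_T = \emptyset$. These are the hypothesis and condition of the final sentence of Theorem~\ref{thm:lift} for $P_1 = G$, $P_2 = T$ and $\lambda$ the tower's labeler. That theorem then gives that $C$ certifies $(W,o)$. Nothing more is needed than checking that the tower's composition matches the pipeline construction of Theorem~\ref{thm:calculus}, which holds by definition.

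\emph{Only if, covering part.} Assume $C$ certifies $(W,o)$ and take $x \in \mathrm{dom}(C)$ with an occurrence at $p$. First, $p+o$ is a token start of $C(x)$, hence of $G(x)$ by Proposition~\ref{prop:subset}, since $\mathrm{dom}(C) \subseteq \mathrm{dom}(G)$. Second, we must exclude a start of $G(x)$ inside $(p+o, p+h-1]$. This is the main obstacle, because Proposition~\ref{prop:subset} runs the wrong way: $C$ may merge $G$-segments, so $C$ having no interior start says nothing about $G$. I would first try to show the covering hypothesis cannot fail without $C$ itself failing, but I expect a counterexample (inner $\{\code{a},\code{b}\}$, outer token $(\code{a},\code{b})$, window $(\code{ab},0)$). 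In that case the stated equivalence holds only if "covers" is read so that the innermost certification supplies it. The fallback is therefore to use the innermost stage's certificate along the tower. Every intermediate stage's starts lie between the innermost's and $C$'s by repeated Proposition~\ref{prop:subset}, and $G$ is taken to be that innermost labeled stage. Then $G$ certifies $(W,o)$ by hypothesis, and so it covers every occurrence on $\mathrm{dom}(C) \subseteq \mathrm{dom}(G)$ correctly. I would commit to this reading, with $G$ the innermost stage and $T$ the rest of the tower, which is itself a pipeline by Theorem~\ref{thm:calculus}.

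\emph{Only if, emptiness part.} Suppose $(\gamma,k) \in R_G \cap L_T$. The refutation clause of Theorem~\ref{thm:lift} produces a string on which $C$ refutes $(W,o)$, which contradicts certification. So $R_G \cap L_T = \emptyset$, and the equivalence is complete. Because $R_G$ depends only on $G$, its labeler and the window, and $L_T$ only on $T$, this shows the certificate factors through the levels on $\mathrm{dom}(C)$. Applying the same argument recursively to $T$ gives the level-by-level factorization.
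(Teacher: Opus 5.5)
Your \emph{if} direction and the emptiness half of \emph{only if} are correct. The covering half of \emph{only if}, however, is proved only in a degenerate case.

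In the statement, $G$ is the lower part of the tower and may itself be a tower, with the innermost stage at or below it. That is why the covering conjunct appears separately and quantifies over $\mathrm{dom}(C)$ instead of asserting that $G$ certifies $(W,o)$. It is also why $R_G$ and $L_T$ are the languages of that given split. Taking $G$ to be the innermost stage turns the conjunct into a restatement of the hypothesis and the proposition into a restatement of Theorem~\ref{thm:lift}.

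The regrouping you invoke, with $T$ the rest of the tower as a pipeline over the innermost stage's labels, is not supplied by Theorem~\ref{thm:calculus}. An upper stage's labeler reads the bytes of the composite segments below it, and a non-injective lower labeler need not preserve them. Take an innermost stage splitting $\{\code{a},\code{b}\}$ into single bytes under a constant labeler, a middle stage splitting its one-letter stream, and a top stage that reads spellings and is maximal munch over $(\code{a},\code{b})$, $(\code{a})$, $(\code{b})$. Then \code{ab} and \code{ba} have the same innermost segmentation and label stream but different segmentations under $C$.

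Your opening remark that the innermost certificate serves only effectivity is also wrong. It is exactly what the covering half needs.

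The missing step is the one you wrote down and then set aside, the chain of token-start inclusions. Take $x \in \mathrm{dom}(C)$ with an occurrence at $p$. Proposition~\ref{prop:subset}, applied at each level, puts $x$ in the innermost stage's domain. It also places the token starts of $C(x)$ among those of $G(x)$, and those among the innermost stage's. The innermost certificate therefore leaves no start of $G(x)$ in $(p+o,\,p+h-1]$. So $G$ covers the occurrence correctly exactly when $p+o$ starts a token of $G(x)$. That follows from $C$'s certificate, because $p+o$ starts a token of $C(x)$.

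Your anticipated counterexample, inner $\{\code{a},\code{b}\}$ with window $(\code{ab},0)$, violates the hypothesis, since that inner stage does not certify $(\code{ab},0)$. It shows the hypothesis is necessary, not that the direct argument fails.

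With this step your proof works for arbitrary $G$ and coincides with the paper's. The paper uses the same inclusions to see that when $G$ fails to cover an occurrence, it fails at $p+o$, so $C$ refutes the window as well.
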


\begin{proof}
By Proposition~\ref{prop:subset} at each level, the token starts of $C(x)$ lie among those of $G(x)$, which lie among
the innermost stage's, and the innermost certificate leaves no start in $(p+o,\,p+h-1]$ for any occurrence at $p$ in any
$x \in \mathrm{dom}(C)$, so $G$ covers such an occurrence correctly exactly when $p+o$ starts a token of $G(x)$, and $C$
exactly when $p+o$ starts one of $C(x)$. If $G$ covers every occurrence in $\mathrm{dom}(C)$ correctly,
Theorem~\ref{thm:lift} with $P_1 = G$ and $P_2 = T$ gives the equivalence. Otherwise some $p+o$ starts no token of
$G(x)$, hence none of $C(x)$, so $C$ refutes the window and both sides fail.
\end{proof}

The proposition thus poses a separate question at each level of a tower, with the quantifier running over the whole
tower's domain, and both premises are necessary. A top stage that rejects the stream carrying a violation of $G$ hides
it from $C$, so the quantification must run over $\mathrm{dom}(C)$ and not $\mathrm{dom}(G)$. A top stage that merges
the window's bytes into one token repairs a violation of $G$, so without the innermost certificate the equivalence
fails. Both examples are among the staged cases, and the equivalence is checked by enumeration on 60 three-stage cases.

\section{The weighted theory}\label{sec:weighted}


The decisions say whether; deployment asks how often. The hypothesis of this section covers every well-formed functional
verifier presenting a policy, in the sense of Theorem~\ref{thm:four}, under a rational, unifilar, almost surely stopping
finite-state source, unifilar meaning that the current state and the byte emitted determine the next state. A memoryless
byte source with rational byte probabilities and a positive rational stopping probability is the one-state case, and
Proposition~\ref{prop:weighted-general} shows that the theorem holds without unifilarity.

Token counts below follow the final-zero convention of Section~\ref{sec:interface}. For a nonempty string, the token
count is the number of set boundary bits plus one, and the empty string has zero tokens. Differential mass between two
policies is the mass of the strings of their common domain that they segment apart. A per-byte rate divides an expected
count, such as the expected number of mismatched boundary bits, by the expected length, which keeps it rational where an
expectation of per-string ratios need not be.

\begin{theorem}[Exact weighted shadow]\label{thm:weighted} Let $V$ be a well-formed functional finite verifier
presenting a policy, and let $S$ be a finite-state byte source with rational transition and stopping probabilities,
unifilar, and almost surely stopping. The following are rational: the tokenizability mass of $V$ under $S$; the
unnormalized expected length and expected token and boundary counts contributed by accepted strings; and, for a second
such verifier $V'$, the differential mass and the unnormalized expected length and number of mismatched boundary bits on
the common domain. Each is zero when the product has no live state. Otherwise each is the start-state coordinate of the
unique solution of an affine system $x = Ax + b$ over the live states of the product of $V$, or of $V$ and $V'$, with
$S$, where the spectral radius of $A$ is strictly below one. Every rate that is a quotient of two of these expectations
is rational wherever its denominator, an expected length, is positive.
\end{theorem}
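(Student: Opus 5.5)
The plan is to reduce every quantity to absorption in a finite Markov chain built from the product of the verifier with the source. The first step is to form the product of $V$, or of $V$ and $V'$, with $S$. Because the verifier is deterministic over marked strings but the marks are hidden, the product walks bytes only. For each string in the domain, functionality gives exactly one accepted marking. So I would take the product states to be triples $(q, s)$ of verifier state and source state, and a byte $\sigma$ emitted from $s$ with probability $p_s(\sigma)$ branches into the two mark choices.

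A string's mass is then carried along its unique accepting marked path. Every nonaccepted marked path is trimmed by restricting to live states, those reachable and co-accessible as in the realizability item of Theorem~\ref{thm:four}. Summed over all marked paths to acceptance, the weight $\prod p(\sigma)\cdot \mathrm{stop}(s)$ counts each domain string exactly once. Unifilarity is what makes the source side deterministic: the byte string determines the source path, so the weight of a marked path is the probability of its byte projection and no double counting arises. For two verifiers, the mark guesses are made independently and a latch records disagreement, as in the boundary-equivalence product. Functionality again makes each commonly tokenizable string correspond to exactly one jointly accepting path, and it is flagged exactly when the string is segmented apart.

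Next I would write the systems. Let $x_{(q,s)}$ be the mass of accepting completions. Then
\[
x_{(q,s)}=\sum_{\sigma,m}p_s(\sigma)\,x_{(\delta(q,\sigma,m),\,s\sigma)}+[q\text{ accepting}]\,\mathrm{stop}(s),
\]
which has the form $x=Ax+b$ with rational $A,b$. For the expectations I would attach a per-step reward, namely $1$ per byte for length, the mark bit for boundaries, and $[m\neq m']$ for mismatches, with token count equal to boundaries plus acceptance of a nonempty string. The reward version uses the same $A$, and its $b$ is built from $A$ times the first solution. Equivalently, I would solve the bivariate system obtained by differentiating a generating variable. If no live state exists, every quantity is an empty sum and equals zero.

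The main obstacle is proving $\rho(A)<1$ on live states, since this is what gives uniqueness, invertibility of $I-A$, and hence rationality by Cramer's rule. The difficulty is that $A$ sums over two marks per byte, so it is not substochastic a priori. My first attempt would be to bound $A^n\mathbf 1$ entrywise. The entry $(A^n\mathbf{1})_{(q,s)}$ is the total weight of live marked paths of length $n$. Each such path extends to an accepting one, and a byte prefix $u$ from $q$ admits, among live marked continuations, only markings consistent with some accepted marking of some completion. I would show that the number of live markings of $u$ from a state is bounded by a constant $K$ independent of $|u|$. The argument is that two distinct live markings of the same byte prefix reaching the same state would, after a common completion and a common reaching prefix, violate functionality. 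So the count is at most the number of states, or the number of product states for two verifiers.

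With that bound, $(A^n\mathbf 1)_{(q,s)}\le K\cdot\Pr_S[\text{no stop in }n\text{ steps from }s]$. Almost sure stopping of a finite rational source makes the right side decay geometrically, because from every reachable source state stopping is reachable within $|S|$ steps with positive probability. This gives $\rho(A)<1$. Reward expectations are finite by the same geometric decay. Finally, any quotient of two of these rationals whose denominator, an expected length, is positive is rational.
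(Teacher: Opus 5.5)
Your proposal is correct and follows the paper's proof essentially step for step. It uses the same product with $S$ trimmed to live states, with a latch for the pair quantities, and the same functionality argument that distinct live markings of one byte prefix occupy distinct verifier states, which bounds the mass of length-$n$ paths by a constant times the probability that $S$ emits at least $n$ bytes. The block-wise geometric decay from almost sure stopping then gives $\rho(A)<1$, uniqueness and rationality, as in the paper. Your explicit reward systems for the expectations and your entrywise bound on every live row of $A^{n}\mathbf{1}$ spell out what the paper states more briefly. One point you leave implicit, and the paper states, is that the geometric decay applies because each live state's source component is reached with positive probability.
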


\begin{proof}
Each mass and each unnormalized expectation is a vector indexed by the live states of the product of the verifier with
the source, the states reachable from the start from which joint acceptance is reachable, where the verifier accepts and
the source can stop. For the pair quantities the product runs both verifiers over one byte stream with their marks
guessed in pairs and their first disagreement latched. Each such vector satisfies an affine fixpoint $x = Ax + b$ with
coefficients drawn from the byte probabilities and stopping mass, and each rate is the ratio of two start-state
coordinates of such solutions. When no state is live the system is empty, and the mass, the expected length and the
expected counts are zero. Those systems are absorbing chains only in the deterministic presentation, since a raw row can
exceed one.

The bound on the spectrum comes from functionality. On a raw marked verifier both marks of one byte can lead to live
states, as they do over $\{\code{a}, \code{ab}\}$ after reading $\code{a}$, and rows of $A$ can then exceed one, so the
bound on $A^{n}$ has to come from multiplicity, since row sums give none. Two markings of one byte prefix that reach the
same live state extend, by liveness, along one completion to two accepted markings of one string, which functionality
forbids. Hence the alive markings of a prefix occupy distinct states, at most $N$ of them for $N$ live states, and the
mass of the paths of length $n$ is at most $N$ times the probability that the source emits at least $n$ bytes. For the
quantities solved over the product of two verifiers a marking is a pair of markings, and the same argument runs on
whichever component the two differ in.

From every state it enters with positive probability, a finite-state source that stops almost surely has a positive
probability of stopping within $m$ steps, $m$ its state count. Those states are finitely many, so the least of these
probabilities is a $\delta > 0$, and the probability of emitting at least $n$ bytes decays geometrically over blocks of
$m$ steps, by a factor of $1 - \delta$ per block, so its sum over $n$ is finite.

The Neumann series $\sum_n A^{n}$ therefore converges from the start state, hence from every live state, each reachable
from the start with positive mass, and the spectral radius is strictly below one, so the solution is unique. Where a
deterministic online presentation exists and every source state stops with positive probability, one successor per byte
keeps every row at most one minus the stopping probability and the same bound follows in a line. Rational coefficients
keep the unique solution rational, and a quotient of two rationals with positive denominator is rational. \end{proof}

\begin{proposition}[Positivity]\label{prop:positivity} Under the hypotheses of Theorem~\ref{thm:weighted}, the
differential mass of $V$ and $V'$ under $S$ is positive exactly when the boundary half of the equivalence decision of
Theorem~\ref{thm:four}, run on the restrictions of $V$ and $V'$ to the support of $S$, returns a differing string.
\end{proposition}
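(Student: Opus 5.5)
The plan is to show that the differential mass is a sum of nonnegative terms, one for each string of the common domain segmented apart, each weighted by the source's probability of emitting that string and then stopping. The mass is then positive exactly when some such string has positive weight. I would first define the support of $S$ as the set of byte strings $x$ whose emit-then-stop probability is positive, and observe that this set is regular. By unifilarity each string has at most one source path, and its weight is positive exactly when every transition along that path and the final stopping probability are positive. So the support is accepted by the finite automaton on the source's positive-probability transitions, with accepting states those of positive stopping probability. The restrictions of $V$ and $V'$ to this support are then restrictions to a regular schema, which Theorem~\ref{thm:calculus} keeps in the class. By Proposition~\ref{prop:monotone}'s reasoning, each restriction segments every string exactly as the original does, on the original domain intersected with the support.

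Next I would prove the two directions. Suppose the boundary half of the equivalence decision of Theorem~\ref{thm:four} returns a differing string $x$ on the restrictions. Then $x$ lies in both restricted domains and is segmented apart there. Hence $x$ is in the common domain of $V$ and $V'$, the two policies segment it apart, and $x$ lies in the support, so its weight is positive. Since the differential mass is a sum of nonnegative terms containing this one, it is positive. Conversely, suppose the mass is positive. Some string $x$ then contributes positive weight, so $x$ is in the support, in both domains, and segmented apart. This makes $x$ a boundary differential of the restrictions, and the decision, which is exact by Theorem~\ref{thm:four}, returns a differing string.

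The main obstacle is making sure the mass computed in Theorem~\ref{thm:weighted} really is this sum over strings. The product there guesses marks in pairs and latches the first disagreement, so I must check that each string is counted once and not once per run. I would settle this with functionality. Each string in the common domain has exactly one accepted marking on each side, so exactly one jointly accepted pair of marks, and the latch is set on that pair exactly when the segmentations differ. The product's path weight therefore equals the string's source probability times the indicator of differing. Summing over paths, which is absolutely convergent by the Neumann-series argument of Theorem~\ref{thm:weighted}, gives $\sum_x \Pr_S(x)\,[x \text{ differential}]$. From there the positivity argument is immediate.
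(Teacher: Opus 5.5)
Your proposal is correct and follows the paper's proof. The support is regular through the positive-probability transitions with acceptance at positive stop, the restrictions to it lie in the class by Theorem~\ref{thm:calculus} with segmentations unchanged, and the differential mass is a sum of nonnegative per-string probabilities. Your extra check that the paired product counts each string once is harmless but not needed here: the differential mass is defined as the mass of the commonly tokenizable strings segmented apart, and its agreement with the product's solution is the business of Theorem~\ref{thm:weighted}.
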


\begin{proof}
The support of $S$, the strings it emits with positive probability, is a regular language, recognized by the source's
states with its positive-probability transitions, accepting where the stopping probability is positive. Differential
mass is a sum of nonnegative terms, one per string of the common domain that the two policies segment apart, each that
string's probability under $S$, so it is positive exactly when some such string lies in the support. A differing string
through a zero-probability byte carries no mass. The restrictions of $V$ and $V'$ to the support are verifiers by
Theorem~\ref{thm:calculus}, with the common domain cut down to the support and the segmentations unchanged, so the
boundary half run on them returns a differing string exactly when such a string exists.
\end{proof}

The measure and the logic thus agree about zero, on the boundary half alone, since the measured event is disagreement on
the common domain. Two policies whose domains differ while their shared segmentations agree carry zero differential mass
and are still separated by the domain half, which the law does not speak about. Under a full-support source, one in
which every byte and the stop carry positive probability, the restriction is the identity and the law takes its simplest
form.

The theorem itself holds for sources in which one state and one byte lead to several next states, provided each
transition emits exactly one byte.

\begin{proposition}[Sources need not be unifilar]\label{prop:weighted-general} Theorem~\ref{thm:weighted} and
Proposition~\ref{prop:positivity} hold for every finite-state byte source $S$ with rational probabilities on its
byte-labeled transitions and on stopping that emits exactly one byte per transition and stops almost surely, whether or
not $S$ is unifilar.
\end{proposition}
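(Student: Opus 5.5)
The plan is to rerun the proof of Theorem~\ref{thm:weighted} with the source state carried as a component of the product. The only step that changes is the one bounding the mass of length-$n$ paths.

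\emph{Construction.} Index unknowns by live states of the product of $V$, or of $V$ and $V'$, with the source. Because $S$ is not unifilar, a source state and an emitted byte may lead to several next states. The product then branches over the pair (next source state, guessed mark), and each branch carries its transition probability. This is legitimate because each source transition emits exactly one byte: a product path still reads one byte per step, and its weight is the probability of that source path together with the guessed marks. The mass of a string is the sum over its source paths, and summing over paths is exactly what the linear system $x = Ax + b$ does. So the system is formed exactly as before, with $b$ collecting the stopping probability at states where the verifier side accepts. Rationality of the coefficients is unchanged, and so is the empty-system convention when nothing is live.

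\emph{Spectral bound, the main step.} Functionality bounds the markings and says nothing about source paths. I would factor the two. Fix a byte string $u$ of length $n$. By the argument already given in Theorem~\ref{thm:weighted}, the markings of $u$ that remain live occupy pairwise distinct verifier states, so there are at most $N_V$ of them, with $N_V$ the number of live verifier states. For pairs, apply the same argument componentwise, as the paper does.

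The total weight of product paths over $u$ is therefore at most $N_V$ times the sum, over source paths emitting $u$, of their probabilities. That sum is the probability that $S$ emits a string beginning with $u$. Summing over all $u$ of length $n$ bounds the mass of length-$n$ product paths by $N_V\Pr[S\text{ emits at least }n\text{ bytes}]$.

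The geometric decay argument of Theorem~\ref{thm:weighted} never used unifilarity. From every reachable source state there is a positive probability of stopping within $m$ steps, which gives a $\delta>0$ and decay by a factor $1-\delta$ per block of $m$ steps. So $\sum_n A^n$ converges from the start coordinate, hence from every live coordinate, since each is reachable with positive weight. The spectral radius is then below one, the solution is unique, and it is rational. Rates follow as before.

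\emph{Positivity.} The support of a nonunifilar source is still regular. It is recognized by the nondeterministic automaton on the source states with positive-probability transitions, accepting where stopping has positive probability, and it can be determinized to get a regular schema. Proposition~\ref{prop:positivity}'s argument then goes through verbatim. Differential mass is a sum of nonnegative string probabilities, and a string has positive probability exactly when some source path emitting it has positive weight, that is, exactly when it lies in the support. Restriction to the support is handled by Theorem~\ref{thm:calculus}, and the boundary half of Theorem~\ref{thm:four} does the rest.

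I expect the bounding step to be the only delicate point. The distinctness of live states must be applied per byte string to the verifier component alone, not to the product state. Otherwise distinct source paths over the same $u$ could appear to violate it.
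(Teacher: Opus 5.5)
Your proposal is correct and follows the paper's proof: the same product over source and verifier states with one edge per positive-probability transition and admitted mark, the same functionality bound applied to the verifier component alone, the unchanged block-decay argument, and the nondeterministic automaton for the support. The only difference is bookkeeping. You bound the live markings once per byte string, uniformly over the source runs that emit it, while the paper fixes a source run and counts the live marking paths along it; both give $|Q|$ times the probability that $S$ emits at least $n$ bytes.
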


\begin{proof}
The proofs of Theorem~\ref{thm:weighted} and Proposition~\ref{prop:positivity} use unifilarity, the fact that a byte
prefix determines the source's run, in two places, the product's definition and the bound on the spectrum, and both are
rebuilt here to allow several source runs for one prefix. Build the product over pairs of a source state and a verifier
state, with one edge for every source transition of positive probability and every mark the verifier admits on its byte,
weighted by the transition's probability, and trim it to its live states. A string's probability is the sum of the
weights of the source runs that emit it and stop, and functionality gives its bytes at most one accepted marking, so the
masses and expectations are again start-state coordinates of affine systems $x = Ax + b$ over the live states.

Fix a run of $S$ through $n$ transitions. Two distinct marking paths along it that reach one verifier state at a live
product state extend, along one source continuation and one marked completion to joint acceptance, to two accepted
markings of one string, which functionality forbids. The live marking paths along the run therefore occupy distinct
verifier states, at most $|Q|$ of them for the verifier's state set $Q$, and summed over the runs the mass of the
product's paths of length $n$ is at most $|Q|$ times the probability that $S$ emits at least $n$ bytes. For the pair
quantities a path carries a pair of markings, and two distinct pairs reaching one pair of verifier states at a live
product state differ in a component that would receive two accepted markings of one string, so at most $|Q|\,|Q'|$ pairs
live along each run, $Q'$ the second verifier's state set. The block bound in the proof of Theorem~\ref{thm:weighted}
uses only that $S$ is a finite Markov chain on its states that stops almost surely, so the probability of emitting at
least $n$ bytes decays geometrically over blocks, the Neumann series converges from the start state and from every live
state, the spectral radius of $A$ is below one, and each solution is unique and rational.

For positivity, the source's positive-probability transitions, accepting where the stopping probability is positive, now
form a nondeterministic automaton, whose language, the support of $S$, is still regular, so the proof of
Proposition~\ref{prop:positivity} runs unchanged on the restrictions to it.
\end{proof}

The executable interface of this paper remains unifilar. Its source objects present one successor per state and byte,
and the solver and its checks below run on those, so Proposition~\ref{prop:weighted-general} is proved here and not
claimed as implemented.

Each system is solved exactly over the rationals, every coordinate of the residual $x - (Ax + b)$ is checked to be zero,
and on 30 staged triples the dynamic program agrees exactly with per-string enumeration. The deterministic engine serves
the literal instances, while the regular-token example of Section~\ref{sec:real}, which has no online scanner, and the
two-mode instance are decided by the trimmed product alone, cut down to its live states.

The raw product is validated separately for each family. The memoryless literal cases and the literal pairs are compared
with their exact deterministic engines, and the two-state literal cases are checked by exact inverse and residual
identities and by finite multiplicity and path-mass tests. For the regular and two-mode single-policy cases, and for the
policy pairs outside the literal family, finite enumeration with exact tail bounds brackets the answers.

Unifilar Markov sources carry every quantity over by the same methods. The block-survival bound above replaces the
memoryless one-step tail, the solver refuses a source that does not terminate and gives its reason, and on a one-state
source it returns exactly the rationals of the memoryless solver it generalizes.

The positivity law is checked in its plain form on 100 ordered pairs under a full-support source, and in its
support-refined form, with the right-hand side computed by the paper's own restriction calculus, on 60 combinations of a
policy pair and a source, including two in which the policies differ globally but have differential mass exactly zero,
because the source's support excludes every differing string.

Two closed forms show what the solver computes. Both are outputs of the same solved systems, reported without separate
enclosures of their own. The one-merge extension $\{\code{a}, \code{b}, \code{ab}\}$ to
$\{\code{a}, \code{b}, \code{ab}, \code{ba}\}$ under a memoryless source has differential probability exactly $4/11$ and
per-byte mismatch rate $1/5$ under $P(a) = P(b) = 2/5$ and $P(\mathrm{stop}) = 1/5$. The analogous extension
$\{s, q, o, qo\}$ to $\{s, q, o, qo, os\}$ over the three classes structural, quote and other, under corpus digram
statistics, has differential mass about $0.063$, exactly $3396201545233/54186662216209$, against $833/2561$, about
$0.325$, under the matched uniform source. These two exact masses differ because the sources weight strings differently,
even though their stopping probabilities and expected lengths match. The digram conditionals are unsmoothed, both
sources stop with probability exactly $1/8$ at every state, the corpus source starts in the class of the corpus slice's
first byte, and the uniform comparator's expected length, exactly seven, equals the corpus source's. Corpus structure,
on this slice, concentrates mass on strings where the two vocabularies agree.

\section{Instances and applications}\label{sec:instances}

The series' main empirical results are reported in the two companion papers. The recovery companion measures certified
recovery against the classical conventions on archived campaigns, and the splitting companion measures anchor supply,
splits the recovery campaign's archived corpus at certified anchors with a threaded scanner, and decides the recovery
campaign's change of whitespace treatment with a synthesized differing string. This paper presents instances of the
abstraction and measures certificate transfer under dictionary growth.

A vocabulary upgrade from 320 to 352 tokens produces a boundary differential, a differing string of length two. The
tokens are byte-pair merges over the full byte alphabet, trained on the JSON sample the programs read, simdjson's
\code{twitter.json}. That file is fetched from its pinned release, held to its digest and read byte for byte. The added
merges change the segmentation of a two-byte input, showing how an upgrade can invalidate cached segmentations.

Byte-pair encoding is an exact instance of the class. Canonical BPE has the finite-state constructions of
\citet{cognetta2025tokenization} and \citet{berglund2024bpedfa}. For a proper dictionary, one whose every multi-symbol
merge part is an earlier merge's product, the construction of \citet{berglund2024bpedfa} is a subsequential transducer
from strings to tokenizations by Theorem~4 of that paper's extended version. Writing each emitted token $u$ as the mask
$0^{|u|-1}1$ and closing the output with a zero bit in place of the last one keeps that transducer subsequential and
length-preserving, so the bridge of Section~\ref{sec:real} places such a policy inside the online-scannable region, and
the four decisions apply to it directly. Canonical BPE is nonetheless not in general a maximal-munch policy, for the
reason Section~\ref{sec:related} records.

The lift of Theorem~\ref{thm:lift} decides whether each certificate of a growing dictionary survives its next merges.
Appending merges is one way pretrained tokenizers are adapted. \citet{purason2026teaching} continue a tokenizer's BPE
training on new data, \citet{smith2026inplace} continue its merges in place, and \citet{herold2025vocabulary} add domain
tokens under a guarantee that no input tokenizes longer, but none of them asks which certificates survive. The
exploration decides it on byte-pair dictionaries of 64 merges extended by the next $k$ merges of their own training, for
$k$ from 1 to 16. It runs two dictionaries side by side, one trained on the corpus and the other the first merges of the
GPT-2 table, applied to raw bytes without GPT-2's pre-tokenizer. In the GPT-2 table, 1 of the first 8 appended merges
and 3 of the first 16 have the space byte as their left part.

The verifier is a deterministic tower, one single-lookahead transducer per merge composed into one automaton, since the
subset-construction tower behind the byte-pair policies of Section~\ref{sec:four} passes 5{,}000 states within two
merges. Over the bytes that occur in some merge, with one inert stand-in for all other bytes, the deterministic tower
has 4{,}063 states for the corpus dictionary and 561 for the GPT-2 prefix. The reduction is sound because a byte that no
merge names can only be a single-byte token and no merge joins across it, and it is checked against the full byte base
on 48 sampled pairs. The tower requires that every merge produce a token distinct from the alphabet and from every
earlier product, as both trained tables do. The verifier is held to the reference encoder on exhaustive sweeps over the
five byte-pair universes of Section~\ref{sec:four} and 12 random tables, 2{,}555 and 22{,}746 strings, to the subset
tower on those universes, and to the reference markings of 171 held-out pieces, which the longest extension's tower
accepts while refusing every one of their 8{,}037 single-mark perturbations.

The inventory, frozen before any extension is decided, holds every window of width one to three, with every origin, in a
16 KiB slice held out from training, 3{,}740 pairs for the corpus dictionary and 2{,}684 for the GPT-2 prefix. Of these
the base towers certify 1{,}174 and 902, with the 2{,}566 and 1{,}782 refutations replayed by the encoder. The anchor
supply is measured on the next 16 KiB.

Each certificate is then decided three ways under every extension, by the composite decision on the extended tower, by
the exact lift and by the compatible-token test. For the lift, the inner stage is the base tower and the outer stage is
the tower of the appended merges over base-token names. Their pipeline has the extended tower's marked language, an
equality checked at $k = 16$, and $R$ is built once per certificate from the base tower and $L$ from the outer tower's
verifier. The compatible-token test is the stronger second part of Corollary~\ref{cor:labels}, in the finite-vocabulary
form stated after it. The outer vocabulary contains each base token, including each single byte, as a one-symbol word
and each appended product as the sequence of base tokens it concatenates.

The test is coarser than the per-label condition it bounds, which reads the labels realized left of the boundary. Over
$\{\code{a}, \code{b}\}$ with the base merge $\code{a}+\code{a}$ and the appended merges $\code{a}+\code{b}$ and
$\code{a}+\code{aa}$, the window $(\code{baaa}, 3)$ is certified by both towers and by the exact lift. The only label
realized left of the boundary is \code{aa}, and the per-label condition grants it, while the compatible tokens include
\code{a}, which the new token $(\code{a}, \code{b})$ carries non-finally, and the test refuses it.
Table~\ref{tab:transfer} has the counts. The exact lift equals the composite decision on every certificate of every
extension, and the test grants nothing the lift refuses. Of what the test discards, the lift keeps 474 certificates of
the corpus dictionary at $k = 16$ and 236 of the GPT-2 prefix.

On the corpus dictionary, the test's anchor supply decreases at every tested extension. At $k = 16$ the test supplies
313.8 anchors per KiB against the exact lift's 446.6, the longest anchor-free stretch is 42 bytes against 30, and the
lift recovers 2{,}126 of its 7{,}146 anchor positions on the evaluation slice, positions at which the test supplies no
anchor.

\begin{table}[tp]
\centering
\small
\begin{tabular}{llrrrrrrr}
\toprule
Dictionary & $k$ & \multicolumn{2}{c}{Kept} & Recovered & \multicolumn{2}{c}{Anchors per KiB} & Anchor gap &
Longest \\
\cmidrule(lr){3-4} \cmidrule(lr){6-7}
  & & exact & test & & exact & test & p99 & stretch \\
\midrule
JSON & base & 1{,}174 & & & 482.4 & & & \\
 & 1 & 1{,}168 & 1{,}132 & 36 & 480.1 & 476.4 & 17 / 17 & 22 / 22 \\
 & 2 & 1{,}162 & 1{,}068 & 94 & 478.0 & 462.8 & 17 / 17 & 22 / 24 \\
 & 4 & 1{,}152 & 969 & 183 & 472.2 & 428.2 & 17 / 17 & 22 / 24 \\
 & 8 & 1{,}136 & 799 & 337 & 461.9 & 365.9 & 20 / 21 & 30 / 32 \\
 & 16 & 1{,}111 & 637 & 474 & 446.6 & 313.8 & 20 / 23 & 30 / 42 \\
\addlinespace
GPT-2 & base & 902 & & & 927.9 & & & \\
 & 1 & 898 & 846 & 52 & 923.7 & 914.2 & 2 / 3 & 5 / 7 \\
 & 2 & 895 & 811 & 84 & 922.2 & 902.3 & 2 / 3 & 5 / 7 \\
 & 4 & 889 & 784 & 105 & 920.6 & 887.8 & 2 / 3 & 5 / 7 \\
 & 8 & 878 & 710 & 168 & 914.2 & 615.7 & 2 / 11 & 5 / 20 \\
 & 16 & 867 & 631 & 236 & 909.0 & 574.9 & 2 / 11 & 5 / 20 \\
\bottomrule
\end{tabular}
\caption{Base certificates of a dictionary of 64 merges that survive its next $k$ merges. The composite decision and the
exact lift agree on every certificate and share the columns headed exact, while test gives the compatible-token result.
Kept counts the survivors in the frozen inventory, and recovered counts the certificates the lift keeps but the test
discards. Anchors per KiB, the 99th-percentile gap between consecutive anchors, and the longest anchor-free stretch are
measured on the held-out evaluation slice, the last two in bytes. A cell with two figures gives exact followed by test,
and the base row has no extension to compare. The JSON dictionary is trained on the corpus, and the GPT-2 dictionary is
the first merges of the GPT-2 table, whose appended space-led merges begin at $k = 8$.}
\label{tab:transfer}
\end{table}

\Needspace{8\baselineskip} On the GPT-2 prefix, the test's anchor supply decreases slightly through $k = 4$, to 887.8
per KiB, then drops to 615.7. This drop occurs when the first space-led merge arrives at $k = 8$, while the exact lift
goes from 920.6 to 914.2. From there on the test refuses all 11 base certificates whose boundary follows a space, the
space being carried non-finally by the new token, and the lift, which asks instead whether the merge is ever realized
across the boundary, keeps every one of them. At $k = 16$ the exact lift supplies 909.0 anchors per KiB against the
test's 574.9, its p99 anchor gap is 2 bytes against 11, and its longest anchor-free stretch is 5 against 20, and the
lift recovers 5{,}346 of its 14{,}544 anchor positions.

Across all five extensions, the exact lift costs less even after including its one-time construction, 2{,}863.3 against
16{,}112 seconds of summed task time on the corpus dictionary and 652.6 against 1{,}502 on the GPT-2 prefix. For GPT-2
the first extension alone does not repay the build, the lift taking 425.1 seconds including the build against the
composite decision's 249. The stage-local automata of the base tower are built once, in 1{,}819 and 391 seconds of
summed task time, and excluding that construction a lift product over them is cheaper than deciding the extended
composite from scratch by a factor of 13.7 to 16.3 on the corpus dictionary and 4.7 to 7.3 on the GPT-2 prefix, in
summed task elapsed time. Over a sample of 24 certificates and the five extensions, reuse took 108.4s against 658.3s on
the corpus dictionary and 26.4s against 66.3s on the GPT-2 prefix. The compatible-token test is a sound first filter
that needs no product search, but after sixteen appended merges it discards 29.8 and 36.8 percent of the anchor
positions the exact lift keeps on the two dictionaries, against at most 9.3 percent through four.

WordPiece, the subword tokenization method of \citet{schuster2012japanese}, has a real-vocabulary instance too, at two
depths. The policy is the partial two-mode longest-match lexical policy over the BERT-Base, Uncased vocabulary of
\citet{devlin2019bert}, read on lowercased raw text, its plain entries the initial tokens and its \code{\#\#} entries
the continuation tokens. It is not the end-to-end tokenizer of BERT's reference implementation \citep{bertcode}, since
it omits the basic tokenizer's cleaning, its accent stripping by Unicode decomposition, its splitting on whitespace and
punctuation and its handling of CJK characters. It also omits the \code{[UNK]} fallback with its limit of 200 characters
on word length, and it fails where no token matches. A vocabulary does not determine its inference method, and
\citet{uzan2024greed} compare several over the same vocabularies, so the policy is named by its scan and not by its
vocabulary alone. \citet{song2021fast} compute that scan in linear time from the vocabulary's trie with failure links in
the manner of Aho and Corasick, while here a finite verifier recognizes its marked scans and the decisions quantify over
every string.

The vocabulary file opens with 999 reserved entries, its special tokens and unused slots, and 997 single characters, and
the first depth takes the next 4{,}096 entries in the file's token-ID order over those 997 characters. Reduced jointly
with the policies it is compared with below, its alphabet is a quotient of 41 symbols, namely 39 named characters, 38 of
them named by its own tokens, and two stand-ins. The remaining characters are grouped by whether each compared policy
holds them as initial and as continuation tokens, and each of the two groups that occur has one stand-in. At this depth
\code{<1>} stands for the 957 characters that every compared policy, the prefix's own among them, holds as single
initial tokens and none as continuation tokens, and \code{<2>} for the one character subscript two, which the deepest
comparison also holds as a continuation token.

The reduction is sound. A character that no multi-character token of any compared policy contains can be consumed only
as a single-character token, in the modes its membership vector allows, and no token begun before it reaches across it,
so replacing every such character by another with the same vector leaves every scan's success or failure and every
boundary unchanged. A refuting text over the full base maps to one over the quotient with the same boundaries and the
same occurrence, and a refuting quotient text lifts to a raw one by restoring the raw window's characters at the
distinguished occurrence and writing any representative of its class elsewhere, so a canonical window's verdict is the
verdict of every raw window it stands for. A check of 51 pairs decided both ways, the raw window over all 997 characters
and the canonical one over the quotient, confirms it.

Over the quotient the prefix has a verifier of 29{,}786 states, on which the four decisions take seconds to about a
minute. Of 4{,}747 window and origin pairs from a held-out slice of \code{twitter.json}, the prefix certifies 2{,}065
and refutes 2{,}682, every refutation replayed. Its boundary gap is 15, the length of \code{representatives}, with no
lasso, and the decisions return a realizability mask of 26{,}211 states that admits a next step at every nonaccepting
retained state. The vocabulary one entry deeper is separated from the prefix by the appended word \code{electoral},
which the prefix spells \code{el|e|c|tor|al}.

The two-mode policy and maximal munch over the same initial tokens segment \code{<1>00} differently, and only maximal
munch tokenizes \code{<1><1>}. The modes also show in the cut decision. Of the 476 occurring certificates, those whose
window occurs in some domain string, 466 are cuts that are not independent, each failing on the suffix side only, and a
shortest counterexample among these failures is \code{<1>ca}.

At full lexical depth the policy uses all 28{,}526 entries after the single characters, over the same 997-character base
and excluding the 999 reserved entries. Its verifier is built from its two tries, without exploring the scan's
configurations. Built for the prefix vocabulary, this trie-based construction and the tabulated verifier agree on
corresponding transitions. The full-depth verifier has 138{,}510 states over a quotient of 71 symbols, 70 named
characters and \code{<1>}, which at this depth stands for the other 927 characters, single tokens in both modes here and
initial ones only at the 4{,}096 prefix. Each quotient symbol is a single-character token in both modes, so every string
over the quotient alphabet is in the policy's domain.

On the same 4{,}747 pairs the full-depth policy certifies 1{,}230 and refutes 3{,}517, every refutation replayed. The
boundary gap is 18, the length of \code{telecommunications}, and the search for a lasso takes 708.1 seconds and finds
none. The mask keeps 136{,}766 of 138{,}510 states and admits a next step at every nonaccepting retained state. The
full-depth policy is separated from the 4{,}096 prefix by \code{02} on the boundaries and \code{<1><1>} on the domain,
and from maximal munch over its own initial tokens by \code{<1>0s} on the boundaries alone, the two domains coinciding.
Of the 590 occurring certificates, 196 are cuts that are independent, every one of them with a stand-in in its window,
and 394 are not, again each on the suffix side only.

Within maximal runs of explicitly named characters, which exclude the characters a stand-in represents, anchors are few.
The 16{,}384-character evaluation slice has 1{,}434 such runs, and counting only in the runs the policy tokenizes whole,
the prefix policy certifies 597 anchor positions and the full-depth policy 333, or 37.3 and 20.8 per 1{,}024 characters
of the full slice, none of them inside a run the scan reads as one token.

When runs may also contain the characters a stand-in represents, the raw slice has 1{,}049 maximal alphabet runs, within
which every quotient symbol is a single-character token in both modes at full depth, and the full policy certifies
4{,}853 positions across them, or 303.3 per 1{,}024 characters of the full slice, the other 5{,}003 characters of which
lie outside the alphabet.


The munch lexer library \citep{munch} implements the interface at its release 2.3.0 with a deterministic trim automaton
over marked symbols built from any transition table, the armed run of Section~\ref{sec:interface} over a byte automaton,
and the verifiers the two tokenizer instances above are decided on, the merge tower of canonical byte-pair encoding as
\code{merge\_tower()} and the trie chain of two-mode WordPiece as \code{trie\_chain()}. The library also implements the
four decisions of Section~\ref{sec:four} and the cut and chunk independence decisions of Theorems~\ref{thm:cut}
and~\ref{thm:chunk} as \code{dependence()} and \code{chunk\_dependence()}.

Its test suite holds each of them to reference answers this paper's exploration writes for it. The tests replay every
counterexample of a refuted window or a failing cut or chunk, every differing string of a failed equivalence and every
lasso, from both the library and the fixtures, through a reference scan, a reference encoder or the verifier. The
exception is the cut fixture's uncertified pairs, which the tests require to be refused without reading the refuting
texts recorded for them. Verdicts that return no string and boundary-gap suprema are compared, and a realizability mask
is checked against co-accessibility state by state in the fixture of the four decisions but only for its presence and
size in the merge-tower and trie-chain fixtures.

The fixture of the four decisions has 474 universes, every set of one to three distinct words of length one to three
over two symbols and the five regular sets of the edit sweep, and carries 16{,}116 certification decisions, 474
boundary-gap decisions of which 2 are lassos, 474 realizability verdicts and 1{,}818 equivalence pairs. The cut fixture
decides every window of length one to three at every origin over 489 universes, 5{,}517 certified cuts independent and
62 not. The chunk fixture decides 29{,}085 inventories over 494 universes, 1{,}759 of them not chunk-independent. Both
fixtures are the independence sweeps above with their five byte-pair universes removed, whose tabulated towers run to
thousands of states. The chunk fixture adds one universe of its own, every nonempty string over $\{a, b\}$ segmented
byte by byte, whose start state does not accept.

The merge-tower and trie-chain fixtures hold the two constructions to the reference encoder's markings of 10{,}177
strings over 14 byte-pair tables and the reference scanner's markings or refusals of 13{,}713 strings over 16 two-mode
vocabularies, and pose the four decisions on each.

Given a build of the library at that release, the gate's library stage regenerates the five fixtures, checks that each
matches the library's copy byte for byte, runs the library's verifier, merge-tower and trie-chain tests and its supply
probe, and writes the record this instance reports. Without such a build it reports the archived record and names the
regeneration it did not run.

On \code{twitter.json} under the JSON grammar the library's verifier examines 19{,}434 windows of one to three bytes at
53{,}074 window origin pairs and certifies 284 of them. It agrees with the library's online window route on every pair
except 17 the route refuses conservatively, and exact search confirms each of those certificates. The certified
positions are 43{,}043, or 69.79 per KiB, and the longest anchor-free stretch is 464 bytes.

Related questions about token boundaries arise in language-model serving. The finite-state tokenizer of
\citet{kaplan1998tokenizing} emits its output at a pinch point, a position where all tokenization paths reaching it can
be extended in exactly the same ways. That condition concerns one input's configurations, whereas a certificate here
asks whether one fixed window and origin determine a boundary at every occurrence across the whole domain.

The following systems resolve boundaries for individual inputs. IBM's speculative stream scanning partitions a stream at
likely token boundaries, scans the chunks in parallel and confirms or refutes each guessed boundary
\citep{vonpraun2011speculative}. A natural-language segmenter proposes segmentations and submits each to a linguistic
knowledge component for validation, proposing further ones until the input is validly segmented or no further
segmentation is possible \citep{pentheroudakis2006tokenizer}. A context-aware tokenizer combines a record that left and
right context automata build with its token automaton, segmenting left to right by longest match
\citep{walther2003context}.

An XML preparser runs a copy of its automaton from every possible state at a chunk's start and lets the preceding
chunk's actual end state select the run \citep{chiu2014metadfa}. An editor re-lexes the text an edit affects with a
batch lexer, comparing the new token stream with the old until the two agree \citep{vandevanter1998incremental}.
Incremental lexing restarts from recorded scanner states and reproduces the batch token stream exactly
\citep{wagner1997lexing}, and parallel lexing for parsing lexes chunks under alternative readings and reconciles them at
the joins \citep{barenghi2015parallel}. Parallel tokenization compares the tokenizations of overlapping chunks and
restarts with a doubled chunk when it finds no overlap token at the same position in both \citep{shao2026lopt}. Session
continuation matches fresh against cached token records and accepts a splice only through a long enough equal run
containing a family-specific stable boundary \citep{zhang2026toktier}. In generative steganography the sender
re-tokenizes the generated text after each token to obtain the receiver's view and resets only where that view diverges
from its prediction, suspending the check while it emits pre-segmentation anchor tokens until the next token outside
that family has fixed the boundary \citep{wang2026retoksync}.

For byte-pair encoding under a proper dictionary, the kind standard training produces, appending a suffix can change the
original prefix's tokenization only within a bounded distance of its end \citep{berglund2023formalizing}, the
tokenization is maintained incrementally \citep{jiang2026incremental}, and its streaming delay is bounded from a known
beginning \citep{mamouras2026bpedelay}. Three implementations precompute sufficient cut rules. Gigatoken uses absent
vocabulary byte pairs before selected punctuation, arguing that every merge result is a vocabulary piece
\citep{rod2026gigatoken}. SentencePiece Lite states the analogous character-bigram rule for normalized input
\citep{sentencepiecelite2026}. \citet{xu2026parallel} checks the merge table for cuts across which no merge can join.
GPU implementations of byte-pair encoding run its merges in parallel within thread blocks, dropping regex
pre-tokenization at a small loss in generation quality \citep{you2025blockbpe}, follow GPT-2's merge rules and produce
the same tokens as a CPU version on the sequences tested \citep{kadamba2026gputok}, or hold the token sequence as a
linked list and trade minor discrepancies in the tokenized output for throughput \citep{kovilur2026llbpe}.

The byte-pair bounds and the three static cut rules give guarantees across inputs for one tokenizer family, while the
other methods surveyed here determine the boundaries of the input they process. Certification asks, for a policy
presented by a supplied finite verifier and a proposed window and origin, whether every occurrence throughout the domain
pins that boundary. This predicate is decidable before any input is known. The static cut rules guarantee boundaries
across contexts for fixed patterns; the contribution here is an exact decision for arbitrary candidate windows and for
policies given by finite verifiers.

A prompt cache of a prefix's segmentation is sound for a set of continuations exactly when every continuation in the set
retokenizes the prefix identically. The next proposition gives this guarantee under maximal munch for the continuations
that keep the certified window in place and that the policy tokenizes completely.

\Needspace{6\baselineskip}
\begin{proposition}[Guarded cache soundness]\label{prop:cache} Let $P$ be maximal munch over a literal or regular token
set and let $(W,o)$ be certified by $P$. For every string $x$ of $\mathrm{dom}(P)$ with an occurrence of $W$ at $p$, the
prefix $x[0,\,p+o)$ is in $\mathrm{dom}(P)$ and $P$ segments it as $P(x)$ does, so a cache keyed on the cut $p+o$ is
sound under every continuation that keeps the window in place and that the policy tokenizes completely.
\end{proposition}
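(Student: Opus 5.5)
The plan is to argue directly from the greedy scan, not from the verifier. Maximal munch over a literal or regular token set is computed left to right: from the current token start $s$, the scan takes the longest token that is a prefix of $x[s..)$, commits it, and continues from its end. The domain is the set of strings on which this never gets stuck before the end. The one property I will use is prefix locality of the scan. Whether a token $t$ is the longest token beginning at $s$ depends on $x[s..)$, but once the scan has committed boundaries up to some position $c$, the committed segmentation of $x[0..c)$ does not depend on what follows $c$. That part needs care, because the longest match at a start before $c$ may look past $c$.

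The first step is to invoke the certificate. Since $P$ certifies $(W,o)$ and $x \in \mathrm{dom}(P)$ has an occurrence at $p$, Definition~\ref{def:cover} makes $c = p+o$ a token start of $P(x)$. So $P(x)$ restricted to $[0,c)$ is a sequence of tokens $t_1 \cdots t_m$ concatenating to $u = x[0..c)$, each chosen by the greedy scan on $x$. The second step is to show that the greedy scan on $u$ alone chooses the same tokens. Take induction on $i$ with the scan on $u$ at start $s_i$. The token $t_i$ fits inside $u$ and is a token, so a longest match on $u$ exists and is at least $|t_i|$. It cannot be longer: any token that is a prefix of $u[s_i..)$ is also a prefix of $x[s_i..)$, so a longer one would contradict $t_i$ being the longest match on $x$. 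Thus the scan on $u$ reproduces $t_1,\dots,t_m$, ends exactly at $c$, and $u \in \mathrm{dom}(P)$ with $P(u) = P(x)$ restricted to $u$. The case $c=0$ is the empty prefix, which all three tokenizer constructions accept by the convention of Section~\ref{sec:interface}.

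Notice that this step uses only that $c$ is a boundary of $P(x)$. The truncation direction is harmless for maximal munch, since removing input can only shorten candidate matches, never lengthen them. I therefore expect no real obstacle in the prefix claim itself. The main care goes into the cache statement that follows from it. Take any continuation $y$ that keeps the window in place, so $y$ agrees with $x$ on $[0, p+|W|)$, and that lies in $\mathrm{dom}(P)$. The certificate applied to $y$ again makes $c$ a boundary of $P(y)$. The prefix argument on $y$ then gives $P(y)$ restricted to $u$ equal to $P(u)$, and $P(u)$ equals $P(x)$ restricted to $u$. So the cached segmentation of $u$ is correct for $y$.

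It is worth stating why both guards are needed. Without the window kept in place, the certificate gives no boundary at $c$ in $y$, and a greedy token could cross it. Without $y \in \mathrm{dom}(P)$, $P(y)$ is undefined. For regular token sets, where maximal munch means the longest prefix in the union of the token languages with list priority for the name, I will check that the induction uses only the choice of length. Boundaries depend only on lengths, so the regular case goes through verbatim.
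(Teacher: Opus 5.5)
Your proof is correct and follows the paper's argument. The certificate makes $p+o$ a boundary of the scan of $x$, so every token committed before it lies inside the prefix, and your induction shows the prefix scan commits the same tokens and completes, because every candidate it sees was already a candidate of the scan of $x$. Your explicit treatment of a continuation $y$ that keeps the window, applying the certificate to $y$ and then the same prefix argument, spells out the cache clause that the paper leaves as an immediate consequence.
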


\begin{proof}
Certification makes $p + o$ a boundary of the complete scan of $x$. Every token the scan of $x$ commits before $p + o$
ends at or before it, and every candidate the scan of the prefix sees was already a candidate of the scan of $x$, so the
prefix's scan commits the same tokens and completes, which is the splitting companion's split theorem
\citep{nidhogg2026splitting} for one cut.
\end{proof}

The guarantee is stated for maximal munch, for continuations that keep the window, and for caches keyed on cuts, and the
examples below show why. A continuation that replaces the window's bytes is outside the guarantee. Maximal munch over
\code{a}, \code{ab}, \code{b}, \code{c} certifies \code{c} at origin zero, and the prefix \code{a} scans as one token
before \code{c} and as part of \code{ab} before \code{b}. Certification alone does not guarantee even the guarded form
for arbitrary policies, since a certified window pins the covering token's start and not the prefix's segmentation. A
policy with a two-string domain can certify a shared separator in both strings while segmenting the shared prefix
differently. Consistent restarts at every boundary do not guarantee the guarded form either. Over the alphabet \code{a},
\code{b}, \code{c}, take the policy that splits into single bytes unless the input ends in \code{c}, and scans by
maximal munch over \code{a}, \code{aa}, \code{b}, \code{c} when it does. It is finitely verifiable, restarts
consistently at every boundary, and certifies \code{b} at origin zero everywhere, yet the prefix \code{aa} before that
certified cut scans as two tokens in \code{aab} and as one in \code{aabc}.

The realizability decision governs a masked generator that must emit only completable token sequences, and its mask
provably never dead-ends. For a tokenizer upgrade, the equivalence decision returns a differing string exactly when the
upgrade changes the domain membership or the segmentation of some string, and a changed segmentation can invalidate a
cache keyed on cuts. By Proposition~\ref{prop:labeler} an upgrade can preserve every boundary and still rename every
segment, so a cache keyed on token identifiers needs the labeled comparison, while the boundary comparison here serves a
cache keyed on cuts. The decisions of Theorem~\ref{thm:four} provide these guarantees at that scope.

\section{Related work}\label{sec:related}

This section names the prior result behind each part of the paper and says what remains this paper's own.

\paragraph{The class is classical.} Definition~\ref{def:fv} names a known object. Writing $\mathrm{conv}(x,y)$ for the
convolution that pairs two strings position by position, a function $f$ is \emph{automatic} when
$\{\mathrm{conv}(x, f(x)) : x \in \mathrm{dom}(f)\}$ is regular \citep[Definitions~2.1 and~2.2]{case2013automatic}. The
relations so recognized are the synchronized rational relations of \citet{frougny1993synchronized}, recognized on
convolutions in which the shorter component is padded with blanks, equivalently realized by letter-to-letter automata
with a terminal function whose values are rational residuals
$(S \times \{\varepsilon\}) \cup (\{\varepsilon\} \times T)$, $S$ and $T$ rational, so that residual input remains on at
most one tape \citep[Definition~4.1]{frougny1993synchronized}.

A policy's boundary function maps a byte string to an equal-length bit string, so the convolution needs no padding and a
marked string \emph{is} that convolution. A policy is therefore finitely verifiable exactly when its boundary function
is automatic.

The standard consequences come with the class. Automata-recognizable relations are effectively closed under the Boolean
operations and both quantifiers \citep[Theorem~4.4]{khoussainov1995automatic}, so anything first-order definable from
finitely many automatic functions and relations is automatic again. Automatic structures also have decidable first-order
theories \citep[Corollary~4.2]{khoussainov1995automatic}, a result that goes back to \citet{hodgson1982direct}. This
paper claims none of that. The theory of automatic structures decides sentences, and an accepted string can be extracted
from any automaton it constructs. These general closure and decidability results do not by themselves state the content
of Sections~\ref{sec:four} through~\ref{sec:weighted}, including the exact supremum the boundary gap decision returns,
the realizability mask, and the transfer of certificates under the operations that build real tokenizers, which is this
paper's center.

\paragraph{What each decision inherits.} Each of the four decisions has a predecessor, three resting on earlier results
and certification refining the synchronizing pairs of code theory.

\emph{Equivalence.} Single-valued transducers have decidable equivalence, proved by \citet{blattner1977single} from
their Corollaries~2 and~3 (pp.~314--315). A verifier is a letter-to-letter transducer emitting one bit per byte,
single-valued because the policy is a function, so the decision already follows from that line of work.
\citet{veanes2012sft} state the shape the proof takes as well, that deciding $A \equiv B$ ``reduces to two independent
tasks'', domain equivalence and equality on the common domain. Their decision is non-constructive, and a separate
semi-decision procedure of theirs generates a shortest input on which the two outputs differ, so a differing string of
that kind is prior too. Their definition of single-valuedness admits partial functions and leaves the representation of
outputs free, so their results also cover outputs represented as boundary sets.

The route the proof above takes first is classical too. A marking language is regular and a verifier is deterministic
over marked symbols, so full equivalence is equality of two deterministic automata, which the product over the symmetric
difference decides by the decision problems of \citet[Theorem~10 and Corollary~10.1]{rabin1959finite}. Beyond that, the
two-task formulation tells whether a differing string is a domain difference or a boundary differential, and the
reductions of Section~\ref{sec:real} use domain differences.

\emph{Realizability.} The mask is co-accessibility, which supervisory control calls nonblocking
\citep[Section~6]{ramadge1987supervisory}. Constrained decoding compiles a target language against a language model's
vocabulary and masks the next token by the edges of the automaton that results \citep{koo2024automata}, and with the
automaton trimmed to its co-accessible states that mask admits exactly the tokens after which an accepted completion
remains. The proof above says so where it is used, and the decision here computes that mask from the supplied verifier
through the same interface as the other three.

\emph{Boundary gap.} Whether the boundary gap is bounded is a limitedness question. Read a boundary as a counter reset
and a byte as an increment, and a bounded boundary gap is an instance of limitedness for R-automata, whose counters are
incremented, reset or left unchanged. Limitedness asks whether some bound makes the language accepted within it equal to
the whole accepted language, and \citet[Theorem~4]{abdulla2008rautomata} decide that problem along with universality.
Beyond that verdict the decision here returns the exact supremum under the conventions of Section~\ref{sec:four}, and a
pumpable lasso when the gap is unbounded. A deterministic finite graph makes both elementary. Exact bounds for counter
automata over infinite words are computed as well as decided \citep{colange2015bounds}, and the value returned here is
that quantity for a deterministic marked graph, with this paper's counting conventions and a lasso. The nearest
tokenizer-specific quantity among the sources surveyed is the token-neighbor distance of \citet{li2026streaming}, which
measures how far past a token's end a maximal-munch scanner reads before that token's maximality is settled, from a
known start. The boundary gap needs no known start, measuring the longest boundary-free stretch anywhere in an accepted
segmentation, and the splitting companion separates the token-neighbor distance from certification and the anchor-free
span on one token set \citep{nidhogg2026splitting}.

\emph{Certification.} The certificate refines code synchronization. In the general convention of
\citet[p.~373]{berstel2010codes}, a pair $(x, y)$ of words of $A^{*}$ is synchronizing for $X^{*}$ when $uxyv \in X^{*}$
forces $ux \in X^{*}$ and $yv \in X^{*}$ for all $u$ and $v$ in $A^{*}$, so an occurrence of $xy$ in a word of $X^{*}$
fixes a factorization cut between its halves in every context. Over $X = \{ab, ba\}$ the pair $(b, b)$ is synchronizing
\citep[Example~10.1.2]{berstel2010codes}, though neither half lies in $X^{*}$.
\citet[Definition~2]{carpi2017synchronizing} restrict synchronizing pairs to $X^{*} \times X^{*}$. A code has finite
synchronization delay when some number of codewords on each side makes every such pair synchronizing
\citep[Definition~1]{restivo1975synchronization}. For a regular code, whether that delay is bounded is decidable through
the syntactic monoid of $X^{*}$ \citep[Th\'eor\`eme~2]{lassez1976messages}. For a regular synchronizing code the pair
test is finite-state too. Take a transitive, synchronizing, unambiguous automaton for $X^{*}$ whose base state is its
only initial and final state. Under those hypotheses a pair $(x, y)$ of $X^{*} \times X^{*}$ is synchronizing exactly
when two sets meet only in that base state, the states reachable by an $x$-labeled path from some state and the states
from which a $y$-labeled path exists \citep[Lemma~2(4)]{carpi2017synchronizing}.

The bounded-window form is synchronization with finite delay under a morphism, and \citet[Example~21]{fici2025morphisms}
contrast a language on which it fails with one on which it holds. Under the Thue--Morse morphism the doubled letters fix
the cut between them wherever they occur, yet on the images of $a^{n}b$ and $b^{n}a$ the windows free of such a pair
grow without bound, while every circular factor of at least five letters of the image of a word in $\{ab, ba\}^{*}$
holds one.

Prefix-synchronized codes enforce the property by starting every codeword with a marker that occurs at no other offset
of any concatenation of codewords \citep{morita1996prefix}, a window certified at origin zero by construction of the
code. The window paper of this series places the certificate in this lineage \citep{nidhogg2026splitwindows}, and the
definitions give the exact relation over a prefix code read as a token set. A window occurring in a word of $X^{*}$ is
certified exactly when every occurrence of it in every word of $X^{*}$ has a factorization boundary at the origin and
its part from the origin on is a prefix of a codeword. The first condition is the synchronizing-pair condition, in the
general convention, for the window's two parts $W[0..o)$ and $W[o..)$; the second, which certification adds, pins the
covering codeword, since a codeword beginning at the cut cannot end inside the window's tail without being a proper
prefix of the codeword that tail begins.

Neither condition is about the pieces a worker scans. Over the prefix code $\{\code{ab}, \code{cd}\}$ the window
$(\code{bc}, 1)$ is certified, yet its halves \code{b} and \code{c} are not words of $X^{*}$. Cut independence of
Theorem~\ref{thm:cut} concerns the whole input's prefix and suffix instead, $u\code{b}$ and $\code{c}v$ for an
occurrence in $u\code{bc}v$, each of which must scan alone with the segmentation it inherits, and chunk independence
asks the same of every chunk cut at an inventory's anchors.

Where data-parallel finite-state machines enumerate a chunk's possible entry states \citep{mytkowicz2014data} or
speculate on them \citep{zhao2014challenging}, a chunk-independent inventory makes the restart from the start state
correct, so no entry state is enumerated or guessed. Programmable speculation lexes a chunk from an entry state
predicted by lexing a few characters before it, validates the prediction against the predecessor's actual end state at
run time and re-executes the chunk on a miss \citep{prabhu2010speculative}. The boxing check of
\citet[Section~3.4]{lester2013boxing} validates one known join instead, the concatenation of two analyzed string
abstractions, requiring that every state the lexer may be in at the end of the left one emit a token and restart on
every character that may begin the right one, so that concatenation moves no token boundary.
\citet[Section~5.2.3]{lester2016flow} give the condition in full, requiring of every lexer state possible after the left
abstraction that either it already emits a token or every possible first character of the right abstraction immediately
makes it emit the same token, so that the join is a lexeme boundary. The speculation and the boxing check both validate
a seam at a narrower scope, a speculated state at run time or a known join of analyzed fragments, where a certificate
pins the boundary at every occurrence of its window throughout the domain.

A related notion outside code theory is the resolving block of \citet[Definition~3.3]{adler1983sliding}. For a factor
map of subshifts, all admissible preimages of such a block agree at a selected coordinate. A certificate instead
concerns finite accepted markings, in each of which the token covering the window's final byte starts at the origin.

In constrained decoding, for a fixed suffix under an unknown left context, \citet[Section~5.3.1]{melcer2024constrained}
compute every position of the suffix at which a lexical token begun before it may end, where a certificate asks that one
origin be forced at every occurrence. This paper decides that predicate for an arbitrary supplied verifier, over token
sets that need not be codes and over the domain the supplied verifier accepts, with the covering token's start fixed at
the proposed offset within the window in every occurrence. The decision is an emptiness question once the badly covered
occurrences are recognized, and its proof builds that recognizer for a supplied window and origin.

\paragraph{What the calculus inherits.} The closure results also have predecessors. Rational relations compose
\citep[Theorem~4.13]{elgot1965relations}, which is the pipeline clause once a supplied finite-state labeler makes the
intermediate alphabet finite. The labeler is the clause's hypothesis, since a boundary verifier and a finite set of
names do not by themselves recover which name a segment carries, which is Proposition~\ref{prop:labeler}. Restriction to
a regular schema is intersection with a lifted schema automaton \citep[Theorem~6]{rabin1959finite}. Byte fallback
totalizes a literal vocabulary the way \citet{cognetta2025tokenization} assume the atomic alphabet sits inside the
vocabulary, and the enlarged vocabulary is again finite and literal. Lemma~\ref{lem:outside} is the contrapositive of
the classical fact that a rational relation has a regular domain.

The lift computes an inner stage's realized label streams once and then intersects them with an outer stage's condition,
and \citet{park2025flexible} use the same split between precomputation and later checks at another interface. They
precompute from a maximal-munch lexer composed with a language model's vocabulary the terminal sequences each vocabulary
token can realize from each lexer state. For a context-free parser they precompute many sequences that every stack
accepts and many that no stack accepts, finding the rejected ones through a finite-state overapproximation that erases
the stack. Only the remaining sequences are checked against the stack while decoding. \citet{dong2025xgrammar} decide a
grammar's context-independent accepted and rejected tokens offline and check the rest against the live stack, the same
split between precomputation and checking while decoding, and \citet{li2026psc} compile the acceptance condition of
every vocabulary token into one classifier of the parser stack, so that one examination of the stack yields the whole
mask.

The lift's inner language is instead the set of flagged label streams of whole stage-one segmentations, and its outer
language the set of flagged streams that a finite-state outer stage accepts without starting a token at the flag. The
emptiness of their product decides whether a certificate transfers, where those systems decide which token a mask
admits.

The closure operations themselves are all known, and the section's contribution is to carry each one out on the
verifier, so that the composite is a verifier again and all six decisions run on it unchanged, and to show the
label-recovery hypothesis necessary.

\paragraph{What the weighted theory inherits.} The product of a verifier with a rational finite-state source is a
weighted automaton over the nonnegative rationals, so each mass of Section~\ref{sec:weighted} is the total weight of a
rational series, the series giving each string the weight of its accepted runs. Rational series are the notion
\citet{schutzenberger1961definition} introduced for integer coefficients through matrix representations (his
Definition~3), taken here over the nonnegative rationals. The linear systems that sum such a series are the standard
ones of weighted-automata theory \citep{droste2009handbook}. Where a deterministic online presentation exists the
product is an absorbing Markov chain with rewards, and the masses and expectations are the fundamental-matrix quantities
of \citet[Theorems~3.2.4, 3.3.5 and~3.3.7]{kemeny1960finite}. Where none exists, the expected token, boundary and
mismatch counts are expectation-semiring totals in the sense of \citet[Section~4]{eisner2002parameter}, additive rewards
summed over the paths of a weighted automaton. None of the linear algebra is new here, and neither is the raw product.

\citet{benedikt2014unambiguous} read an unambiguous finite automaton against a Markov chain and solve an affine system
of the same form over the trimmed product. Their event is that a trajectory has an accepted prefix, whereas here it is
that the whole string a stopping source emits is accepted. Their solution is unique because every retained state reaches
acceptance with positive probability, and Section~\ref{sec:weighted} runs the same argument, with functionality
supplying the unambiguity and almost sure stopping supplying the geometric bound on the probability of long strings. The
proof of that note's infinite-word theorem was withdrawn, and \citet{baier2023unambiguous} record that its finite-word
lemma is unaffected. \citet{kiefer2019unambiguous} begin with the known memoryless case, in which letters and a stop are
drawn with fixed probabilities and read by an unambiguous automaton. Rows of that product can sum to more than one, and
the multiplicity bound is the run bound any trim unambiguous automaton supplies per prefix.

The positivity law is the support principle for series over the nonnegative rationals, a positive semiring, whose
supports are again recognizable \citep[Theorem~3.1(1)]{kirsten2011support}. That a total is zero exactly when every
coefficient is zero follows from the same positivity. This paper's own part is the identification of that support with
the boundary decision's differing strings inside the source's support, so that the differential mass is zero exactly
when the boundary decision finds no differing string there. The paper also constructs the systems from the supplied
verifiers through the same interface as the decisions.

\paragraph{The hypothesis is a normal form.} Proposition~\ref{prop:collapse} is Eilenberg and Schützenberger's theorem
\citep{eilenberg1974automata} read over the marked alphabet. The general-transducer undecidability of
\citet{filiot2016equivalence} concerns relations that are not functions, and the presentation-level reductions of
Section~\ref{sec:real}, which show that a program promised to compute a policy of the class cannot stand in for a
supplied verifier, are the standard uniformity obstruction.

\paragraph{The frontier, and whose example it is.} The separation of Section~\ref{sec:real} is
\citeauthor{karttunen1996directed}'s example and \citeauthor{choffrut1977sequential}'s theorem, cited there. The same
paper of \citet{karttunen1996directed} builds tokenizers by inserting an end-of-token mark after each match, a marked
string in another encoding, length-changing where the bit encoding used here is length-preserving. The four decisions
ask only for a verifier, which Karttunen's example still has, so they apply beyond the online-scannable region of
Figure~\ref{fig:landscape}.

\paragraph{The instances have finite-state accounts.} \citet{cognetta2025tokenization} constructively represent both BPE
and MaxMatch, the algorithm behind WordPiece, by finite-state transducers, and the extended version of
\citet{berglund2024bpedfa} gives an input-deterministic transducer relating strings to their BPE tokenizations along
with equivalence checking of tokenization dictionaries. Their constructions place canonical BPE inside this paper's
class, which is why Section~\ref{sec:instances} treats it as an exact instance. They also explain why it is not in
general a maximal-munch policy, since BPE, in Cognetta and Okazaki's words, ``does not tokenize strings from left to
right and requires a notion of priority'', its segmentation being determined by an ordered merge list in the sense of
\citet{sennrich2016subword}. How far canonical BPE departs from maximal munch over the same vocabulary varies with the
vocabulary, can be measured as boundary agreement between the two policies, and is not assumed away.

Neither Cognetta and Okazaki nor Berglund and colleagues decide certification or the boundary gap, and Cognetta and
Okazaki mask token choices to preserve an accepted completion, which instantiates the co-accessibility condition of this
paper's realizability inside their own constructions. Both treat composition, Cognetta and Okazaki defining transducer
composition explicitly and Berglund and colleagues naming composition of tokenized languages as an application, so
composition itself is already part of those accounts and this paper's part is composition of segmentation stages with
certificate transfer.

The dictionary equivalence of Berglund and colleagues asks whether two merge dictionaries induce the same canonical
tokenizations, a question about one tokenizer family's outputs, with a distinguishing string available from its automata
by symmetric difference, where the decision here is the full equivalence of arbitrary policies in the class.
\citet{vieira2025canonical} characterize canonical BPE token strings locally, a token string being canonical exactly
when each adjacent pair is, where every vocabulary token is canonical and no pre-tokenizer intervenes. That tests a
token string's own neighbors, where a certificate here is a fixed byte window whose covering token starts at the same
offset in every occurrence.

\paragraph{Computing the scan.} The scan has its own literature, of which \citet{reps1998maximal} is the algorithmic
reference, giving maximal munch in linear time, with the conventions this paper also adopts, earliest definition on a
longest-match tie and no token language containing the empty word.

\section{Limitations}

The results require a finite verifier. Policies with non-regular domains are outside by Lemma~\ref{lem:outside}, a
policy whose domain is the balanced-bracket language among them. Canonical BPE is an instance by the constructions cited
in Section~\ref{sec:instances}, and its distance from maximal munch is a per-vocabulary measurement this paper does not
report. The exact weighted solver is run only on small state spaces, with the certified enclosures handling larger ones,
and scaling the exact solver is open. The lift of Theorem~\ref{thm:lift} is implemented for regular inner stages under
the least-index labeler and for literal outer stages, together with the constant-labeler example, and it is run once
more in Section~\ref{sec:instances} with byte-pair towers as both stages. The literal pipelines are run through the
per-label corollary, which gives a sufficient condition for certificate transfer but does not decide it exactly. The
stack proposition is proved under the full stack's own domain and an innermost certificate, both shown necessary, and
checked by enumeration alone.

Constructing the verifiers limits the scale of the tokenizer instances. The merge tower grows from 4{,}063 states at 64
merges to 69{,}773 at 128 and 841{,}141 at 256, so the dictionaries of Section~\ref{sec:instances}, at 64 to 80 merges,
are nearly three orders of magnitude below a production table such as GPT-2's 50{,}000 merges. For WordPiece the main
experiments use the tabulated verifier at 4{,}096 entries. At 8{,}192 entries the scale probe builds 56{,}636 states and
2{,}965{,}165 transitions in 175 seconds, and a tabulation attempted at 16{,}384 entries did not complete, while the
construction from the vocabulary's tries reaches 16{,}384 entries and the full lexical vocabulary. On the verifiers of
these instances the recorded decisions take seconds to minutes, and the lasso search over the full lexical WordPiece
vocabulary takes 708.1 seconds. The records specify 12 worker processes but do not identify the processor, operating
system, or Python version.

The token automaton of \citet{berglund2024bpedfa} has linearly many states in the dictionary, but the marked-symbol
verifier over bytes derived from it, which these decisions consume, pairs a state of the token automaton, from its state
set $Q_{\mathrm{token}}$, with a node of $\mathrm{Trie}(V)$, the trie of the byte spellings of the vocabulary $V$, so
its size is bounded by the product $O(|Q_{\mathrm{token}}|\,|\mathrm{Trie}(V)|)$ and depends on those explicit
spellings. The trie has up to one node per byte of each token's spelling, and no polynomial in the merge count bounds
it, since a merged token's spelling is the concatenation of the spellings of its two parts, so repeated merges of a
token with itself double its spelling length at each merge. This paper does not run the decisions directly on the token
automaton.

\section{Conclusion}

The paper rests on one hypothesis, that a policy's segmentations are recognizable by a finite verifier over marked
strings. From a supplied verifier it gives constructive decisions for certification, the boundary gap, realizability and
equivalence, three of them specializing classical decisions, and with them two more, whether one certified cut and
whether all the anchors of an inventory let the resulting pieces scan alone. The class is closed under four operations
real pipelines apply, certificates carry across pipeline stages under proved laws, and under rational finite-state byte
sources that stop almost surely the masses of its policies are exact rationals, with the differential mass of two
policies zero exactly when no boundary differential lies in the source's support. The class strictly contains the
finite-state online-scannable policies and in this encoding is exactly the rational policies.

Every executable claim and reported statistic above is one command away from its check, except the incomplete
16{,}384-entry tabulation, which the module documentation records, and the general results rest on the proofs.
Regenerating the library instance's record needs a build of the library, and regenerating the stored records of the
tokenizer instances needs the two published vocabularies. The modules, the gate and the library instance's record are
archived together as a standalone record at
\href{https://doi.org/10.5281/zenodo.23708161}{\mbox{10.5281/zenodo.23708161}}.

\section*{Tools}

Large language models assisted with drafting, with checking citations against primary records and with reviewing the
implementation; the author verified each suggestion by derivation, against primary sources, or against the
implementation, its tests and the archived artifacts, and is responsible for all content.

\setlength{\bibsep}{1pt}
\bibliographystyle{plainnat}
\bibliography{refs}

\end{document}